\newcommand{\INITIALLY}{\REQUIRE{}}
\newcommand{\ROUND}{\ENSURE{}}
\newcommand{\nop}[1]{}
\newtheoremstyle{newthm}
  {\topsep}   
  {\topsep}   
  {\itshape}  
  {0pt}       
  {\scshape} 
  {.}         
  {5pt}  
  {}          
\theoremstyle{newthm}
\newtheorem{thm}{Theorem}
\newtheorem{prop}[thm]{Proposition}
\newtheorem{lem}[thm]{Lemma}
\DeclareMathOperator{\hull}{hull}
\DeclareMathOperator{\centroid}{centroid}
\DeclareMathOperator{\vol}{vol}
\DeclareMathOperator{\Cube}{Cb}
\newcommand{\IR}{\mathbb{R}}
\newcommand{\IN}{\mathbb{N}}
\renewcommand{\le}{\leqslant}
\renewcommand{\leq}{\leqslant}
\renewcommand{\ge}{\geqslant}
\renewcommand{\geq}{\geqslant}
\newcommand\blfootnote[1]{%
  \begingroup
  \renewcommand\thefootnote{}\footnote{#1}%
  \addtocounter{footnote}{-1}%
  \endgroup
}
\DeclareMathOperator\In{In}
\DeclareMathOperator\Out{Out}
\title{Multidimensional Asymptotic Consensus in Dynamic Networks}
\author{Bernadette Charron-Bost\textsuperscript{1} \and Matthias
F\"ugger\textsuperscript{2} \and Thomas Nowak\textsuperscript{3}}
\date{
  \textsuperscript{1} \'Ecole polytechnique\\
  \texttt{charron@lix.polytechnique.fr}\\
  \textsuperscript{2} LSV, CNRS \& ENS Paris-Saclay\blfootnote{
  When most of the work was done, Matthias F{\"u}gger was with Max Planck Institute for Informatics.}\\
  \texttt{mfuegger@lsv.fr}\\
  \textsuperscript{3} Universit\'e Paris-Sud\\
  \texttt{thomas.nowak@lri.fr}
}
\begin{document}
\maketitle
\thispagestyle{empty}

\setcounter{footnote}{3}

\begin{abstract}
We study the problem of asymptotic consensus as it occurs in a wide range of applications
  in both man-made and natural systems.
In particular, we study systems with directed communication graphs that may change over time.

We recently proposed a new family of convex combination algorithms in dimension one whose weights depend on the
  received values and not only on the communication topology.
Here, we extend this approach to arbitrarily high dimensions by introducing two new algorithms:
  the ExtremePoint and the Centroid algorithm.
Contrary to classical convex combination algorithms, both
  have component-wise contraction rates that are constant in the number of agents.
Paired with a speed-up technique for convex combination algorithms, we get a convergence time linear
  in the number of agents, which is optimal.

Besides their respective contraction rates, the two algorithms differ in the fact that 
  the Centroid algorithm's update rule is independent of any coordinate system 
  while the ExtremePoint algorithm implicitly assumes a common agreed-upon coordinate system
  among agents.
The latter assumption may be realistic in some man-made multi-agent systems but
  is highly questionable in systems designed for the modelization of natural phenomena.

Finally we prove that our new algorithms also
  achieve asymptotic consensus under very weak connectivity assumptions,
  provided that agent interactions are bidirectional.

\end{abstract}

\newpage
\setcounter{page}{1}

\section{Introduction}

The problem of agents converging to a common final position, also known as
asymptotic consensus, is of utmost importance in a wide range of networking
problems.
One can cite not only artificial, man-made, systems like sensor
fusion~\cite{BS92}, clock synchronization~\cite{LR06}, formation
control~\cite{EH01}, rendezvous in space~\cite{LMA05}, or load
balancing~\cite{Cyb89}, but also the
modelization of natural phenomena like flocking~\cite{VCBCS95}, firefly
synchronization~\cite{MS90}, or opinion dynamics~\cite{HK02}.

Algorithms for asymptotic consensus repeatedly form convex
combinations of their neighbors' positions and move their current position
there.
Classically, these algorithms use weights in the convex combination that only
depend on the communication topology, i.e., the set of the agent's neighbors, but
not on their current positions.
This results in weights that are inversely proportional to the number of
neighbors, e.g., the EqualNeighbor algorithm, the MaxDegree
algorithm~\cite{Mon09}, or the Metropolis algorithm~\cite{XBL05}.
Their analysis consists in studying the stochastic matrices made
up of the weights used by agents in their convex combinations.
An important property of these associated stochastic
matrices is that they inherit irreducibility properties from connectivity properties of the
communication graph.

In the present article, we study the problem of asymptotic consensus in dynamic networks,
in the challenging context of directed communication graphs that may change over time.

In a recent article~\cite{CBFN16}, we proposed a new family of convex
combination algorithms in dimension one whose weights depend on the received values.
A particular example of such an algorithm is the MidPoint algorithm, whose
contraction rate of~$1/2$ is optimal.
The analysis of these algorithms required the development of a new approach
since the graphs associated to the stochastic matrices do not coincide
anymore with the communication graphs and thus do not benefit from the same
connectivity properties.

The goal of the present article is to extend this approach to multiple dimensions.
For this, we present two generalizations of the MidPoint algorithm to the case of an arbitrary dimension~$d$:
the ExtremePoint and the Centroid algorithm.
Contrary to classical convex combination algorithms like EqualNeighbor, both
have component-wise contraction rates that are constant in the number of agents, namely $1-\frac{1}{2d}$ 
for the ExtremePoint algorithm and $1-\frac{1}{d+1}$ for the Centroid algorithm.
Paired with a speed-up technique, we get a convergence time linear
in the number of agents for both algorithms, which is optimal.
	
Besides their respective contraction rates, the two algorithms differ in the fact that 
	the Centroid algorithm's update rule is independent of any coordinate system 
	while the ExtremePoint algorithm implicitly assumes a common agreed-upon coordinate system
	among agents.
The latter assumption may be realistic in some man-made multi-agent systems but
	is highly questionable in systems designed for the modelization of natural phenomena.

The analysis of the two multi-dimensional algorithms that we propose is based on 
	the notion of $\alpha$-safeness.
This property guarantees that every agent stays in the convex hull of its
neighbors and keeps a certain safety margin to its boundary which depends on the
parameter~$\alpha$.
While the proof of safeness for the ExtremePoint algorithm is relatively
straightforward, the proof for the Centroid algorithm uses a Steiner-type
symmetrization and relies on the Brunn-Minkowski inequality.
Apart from its application to asymptotic consensus, this last result may be of
independent geometrical interest.

Finally we prove that our new algorithms share a remarkable property with
the classical asymptotic consensus algorithms. 
Namely convergence is achieved under very weak connectivity
assumptions, provided that agent interactions are bidirectional.
This last point adds to a list of properties of the Centroid algorithm that
makes it a well-suited candidate for the modelization of natural systems~\cite{cha11}.

The paper is organized as follows.
In Section~\ref{sec:model}, we introduce the model and problem statement.
In Section~\ref{sec:one}, we recall results on one-dimensional asymptotic consensus.
Section~\ref{sec:multi_component} generalizes the optimal one-dimensional
algorithm to multiple dimensions in a component-wise fashion.
The coordinate-free Centroid algorithm is presented in Section~\ref{sec:centroid}.
Section~\ref{sec:moreau} presents
  extensions to weaker connectivity assumptions with bidirectional communication
  graphs.
We conclude the paper in Section~\ref{sec:conclusion}.

\section{The Model}\label{sec:model}

We consider a set $[n]= \{1,\dots,n \}$ of agents.
We assume a distributed, round-based computational model in the spirit
     of the Heard-Of model~\cite{CS09}.
Computation proceeds in {\em rounds}: in a round, each agent sends
	its state to its outgoing neighbors, receives messages from its
        incoming neighbors, and finally updates its state according to
	a deterministic local algorithm, i.e., a transition function that maps the collection
	of incoming messages  to a  new state.
Rounds are communication closed in the sense that no agent receives
	messages in round~$t$ that are sent in a round different from~$t$.

Communications that occur in a round are modeled by a directed graph with a node 
	for each agent.
Since an agent can obviously communicate with itself instantaneously, every 
	communication graph contains  a self-loop at each node.	
	
We fix a non-empty set of such directed graphs ${\cal N}$ that determines the 
	{\em network model}.
To fully model dynamic networks in which topology may change continually and 
	unpredictably,  the communication graph at each round  is chosen arbitrarily 
	among ${\cal N}$.
Thus we form the infinite sequences of graphs in~${\cal N}$ which we call {\em communication patterns in\/}
	${\cal N}$.
In each communication pattern, the communication graph at round~$t$ is denoted by~$G_t= \big([n], E_t\big)$, and
$\In_p(t)$ and $\Out_p (t)$ are the sets of incoming and outgoing neighbors (in-neighbors and out-neighbors for short) of
	agent~$p$ in $G_t$.
	
In the following, we use the {\em product\/} of two communication graphs $G$ and $H$, 
	denoted $G \circ H$, which  is the directed graph  with an edge from $p$ to $q$ 
	 if there exists~$r $ such that $(p,r)\in E(G)$ and $ (r,q) \in E(H)$.

\subsection{Asymptotic Consensus}

The state or {\em position\/} of agent~$p$ is captured by a variable $x_p$ in an Euclidean 
	$d$-space, and we let $x_p(t)\in \IR^d$ denote the position of~$p$ at round~$t$.
Thus the $n$-tuple $x(t) = \big( x_1(t), \dots, x_n(t) \big) $ corresponds to
 	the global configuration of the multi-agent system at round~$t$.
We denote the $k^{\text{th}}$ component of $x_p(t)$ by $x_{p,k}(t)$.

We say an algorithm {\em solves asymptotic consensus\/} in a network model ${\cal N}$ if the following holds
	for every initial configuration~$x(0)$ and every communication pattern in ${\cal N}$: 
	\begin{description}
	\vspace{-0.1cm}
	\item{\em Convergence.\/} Each sequence $x_p(t)$ converges.
	\vspace{-0.1cm}
	\item{\em Agreement.\/} If $x_p(t)$ and $x_q(t)$ converge, then they have a common limit.
	\vspace{-0.1cm}
	\item{\em Validity.\/} If $x_p(t)$ converges, then its limit is in the convex hull of the 
	initial states.
	\vspace{-0.1cm}
	\end{description}
	
Our results can be easily translated to the {\em approximate consensus\/} problem, in which
	convergence is replaced by a decision in a finite number of rounds and where
	agreement should be achieved with an arbitrarily small error tolerance (see, e.g., \cite{Lyn96,MH13}).
	
\subsection{Convex Combination Algorithms}	

Because of the validity condition, the natural class of algorithms for solving
	asymptotic consensus  is the class of the {\em convex combination algorithms},
	also called {\em averaging algorithms\/} in the case of dimension one:
	at each round~$t$, every agent~$p$ updates $x_p$ to some convex combination 
	of  the positions it has just received, i.e., the positions of its 
	in-neighbors in the communication graph at round~$t$.
That is 
	\begin{align}
 	x_p(t) = \sum_{q \in \In_p(t)} w_{ p q}(t) \, x_q(t-1) ,\label{eq:update}
	\end{align}
	where weights $w_{p q }(t)$ are  non-negative real numbers with  $\sum_{q \in \In_p(t)} w_{p q}(t) = 1$.
In other words,  at each round~$t$, every agent adopts a new position within the convex
	hull of its in-neighbors in the communication graph~$G_t$.

Since we strive for distributed implementations of convex combination algorithms, $w_{p q }(t)$ is 
        required to be locally computable by $p$.
For example, weights may depend only on the set of $p$'s in-neighbors, as is the case in
        the {\em EqualNeighbor algorithm}, with
	\begin{equation}\label{eq:EN}
	w_{p q}(t) = 1/|\In_p(t)| \, ,
	\end{equation}
	for every in-neighbor~$q$ of $p$.
Weight~$w_{p q}(t)$ may also depend on the positions of the in-neighbors of~$p$, as is the
	case, for instance, with the  update rule
	$$ w_{p q}(t) = \delta _{q  q_0 } \, ,$$	
	where $\delta$ is the Kronecker delta and $q_0$ is one in-neighbor of $p$ in $G_t$
	with the largest first component, i.e., 
	$$x_{q_0,1}(t) = \max \{ x_{q,1}(t) : q \in \In_p(t) \}\, .$$

When  the structure of states allows each agent to record and to relay information it has received during
	any period of $L$ rounds for some positive integer~$L$,   we may be led to  modify time-scale and to
	consider blocks of $L$  consecutive rounds, called {\em macro-rounds}:
	macro-round $s$ is the sequence of rounds $(s-1) L +1, \ldots, s L$ and the corresponding
	information  flow graph, called {\em communication graph at macro-round $s$},  is 
	the product of the communication graphs $ G_{(s -1)L +1}  \circ \ldots \circ G_{ s L } $.
	
\subsection{Solvability of Asymptotic Consensus}
	 
In a previous paper~\cite{CBFN15}, we proved the following characterization of network models in which 
	asymptotic consensus is solvable.

\begin{thm}[\cite{CBFN15}]\label{thm:CBFN15}
In any dimension $d$, the asymptotic consensus problem is solvable in a network model ${\cal N}$ 
	if and only if each graph in ${\cal N}$  has a rooted spanning tree.
\end{thm}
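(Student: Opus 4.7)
The plan is to prove the two implications separately.

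\textbf{Necessity.} Suppose some $G \in \mathcal{N}$ has no rooted spanning tree. Then the condensation DAG of $G$ contains at least two source strongly connected components $S_1, S_2 \subseteq [n]$: by definition, every in-neighbor of an $S_i$-agent lies in $S_i$ itself. Consider the stationary communication pattern $G_t = G$ for all $t$. Under this pattern, a deterministic algorithm produces a trajectory $(x_p(t))_{p \in S_i}$ that depends only on the initial values on $S_i$. Now compare three executions of a putative correct algorithm: execution $A$ with all initial values equal to $0$, execution $B$ with all initial values equal to some $v \neq 0$, and execution $C$ with initial values $0$ on $S_1$, $v$ on $S_2$, and arbitrary elsewhere. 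Validity forces every agent to converge to $0$ in $A$ and to $v$ in $B$. Since $S_1$ starts identically in $A$ and $C$, its agents converge to $0$ in $C$; symmetrically, $S_2$-agents converge to $v$ in $C$, contradicting agreement.

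\textbf{Sufficiency.} Assume every $G \in \mathcal{N}$ has a rooted spanning tree. It suffices to exhibit one working algorithm, and I would take the EqualNeighbor algorithm~\eqref{eq:EN} applied componentwise to the $d$ coordinates. Validity is immediate from the convex combination form of the update. Since the coordinates evolve independently, the problem reduces to dimension $d = 1$. Let $u(t) = \max_p x_p(t)$, $\ell(t) = \min_p x_p(t)$, and $\delta(t) = u(t) - \ell(t)$; the convex combination property makes $u$ non-increasing and $\ell$ non-decreasing, so both have limits, and both convergence and agreement follow from $\delta(t) \to 0$.

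The heart of the matter is a Moreau-style contraction lemma: after $n - 1$ consecutive rounds whose graphs each have a rooted spanning tree (and the self-loops guaranteed by the model), $\delta$ decreases by a factor at most $1 - n^{-(n-1)}$. The idea is that over $n - 1$ rounds influence from any ``high-value'' agent propagates along a rooted tree of depth $\le n - 1$, with self-loops allowing previously reached agents to retain their values while successive rounds carry the influence further; each propagation step multiplies the available ``bias'' by at least the minimum weight $1/n$. Iterated over all $(n-1)$-round windows, this yields $\delta(t) \to 0$ geometrically.

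The main obstacle is the contraction lemma, since the rooted spanning tree of each $G_{t+j}$ may differ from round to round; propagating influence from a single source across $n - 1$ rounds requires combining the trees of $G_{t+1}, \dots, G_{t+n-1}$ into one reachability argument, using self-loops to synchronize the rounds at which each agent is first influenced.
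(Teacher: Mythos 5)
Theorem~\ref{thm:CBFN15} is cited from~\cite{CBFN15}; the present paper does not reproduce its proof, but the surrounding text tells you what the intended route is: reduce from rooted to nonsplit topologies via Proposition~\ref{prop:productrooted}, and use the fact that a safe averaging algorithm contracts on nonsplit graphs. Your sketch matches that route. The necessity argument (two source strongly connected components, a stationary communication pattern, and an indistinguishability contradiction between three initializations) is exactly the standard proof, and it is correct — with one cosmetic remark: to conclude that everyone converges to $0$ in execution~$A$ you invoke both \emph{convergence} and \emph{validity}, not validity alone.

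For sufficiency your high-level plan (EqualNeighbor, componentwise reduction to $d=1$, an $(n-1)$-round contraction argument) is the right one, and the bound $1-n^{-(n-1)}$ is the right order, but the invariant you describe is not quite the one that survives. You phrase the contraction lemma as ``propagating influence from a single source'' along a rooted tree combined across the $n-1$ rounds, as though one node's value ends up with weight at least $n^{-(n-1)}$ in \emph{every} agent's position. With roots that change from round to round this is not what one can guarantee. What does hold, and is precisely the content of Proposition~\ref{prop:productrooted}, is \emph{nonsplitness} of the product: every \emph{pair} $p,q$ has some common in-neighbor $r$ in $G_{t+1}\circ\cdots\circ G_{t+n-1}$, and the $1/n$ lower bound on each EqualNeighbor weight then gives both $p$ and $q$ weight at least $n^{-(n-1)}$ on $x_r(t)$. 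Applying this to the pair realizing $\max$ and $\min$ yields $\delta(t+n-1)\le\bigl(1-n^{-(n-1)}\bigr)\,\delta(t)$. So you should replace the single-source picture by the pairwise common-ancestor property; and that property, Proposition~\ref{prop:productrooted}, is the genuinely nontrivial combinatorial step that both your sketch and this paper leave unproved (the paper defers it to~\cite{CBFN15}).
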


The proof of the sufficient condition of rooted network model is  based on 
	a reduction to {\em nonsplit\/} network models: a directed graph is {\em nonsplit\/}  
	if any two nodes have a common in-neighbor. 
Indeed we showed the following general proposition.

\begin{prop}\label{prop:productrooted}
Every product of $n-1$ rooted graphs  with $n$ nodes and self-loops at all nodes is nonsplit.
\end{prop}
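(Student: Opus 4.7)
The plan is to prove the stronger statement that $A_{n-1} := G_1 \circ G_2 \circ \cdots \circ G_{n-1}$ admits a \emph{universal root}: a node $r$ with $(r, v) \in E(A_{n-1})$ for every $v \in [n]$. This evidently implies the nonsplit property, since then $r$ is a common in-neighbour of every pair. For each $k$ and $r$, let $F_k(r) := \{v : (r, v) \in E(A_k)\}$ denote the forward-reach of $r$. The self-loops at every node ensure $r \in F_0(r) = \{r\}$, the monotonicity $F_{k-1}(r) \subseteq F_k(r)$, and the recursion $F_k(r) = \{v : \exists u \in F_{k-1}(r), (u, v) \in E(G_k)\}$. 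In these terms, a universal root is an $r$ with $F_{n-1}(r) = [n]$, and the goal reduces to establishing $\max_r |F_{n-1}(r)| \ge n$.

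The main structural ingredient is a \emph{growth lemma}: if $F_{k-1}(r) \subsetneq [n]$ and $F_{k-1}(r)$ contains some root $r_k$ of $G_k$, then $F_k(r) \supsetneq F_{k-1}(r)$. Indeed, were $F_{k-1}(r)$ closed under $G_k$-out-neighbours, then since it already contains the root $r_k$ it would also contain every node reachable from $r_k$ in $G_k$ — that is, all of $[n]$ — a contradiction. Hence, as long as the current witness $r$ has its forward-reach containing a root of the next graph, a new element is added at step $k$.

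The conclusion would then follow by induction on $k \in \{0, \ldots, n-1\}$ from the claim $\max_r |F_k(r)| \ge \min(k+1, n)$. The base case $k = 0$ is immediate. For the induction step, given a witness $r^*$ with $|F_{k-1}(r^*)| \ge k$, the easy subcase is when $F_{k-1}(r^*)$ is not $G_k$-closed — and in particular whenever $r_k \in F_{k-1}(r^*)$: the growth lemma yields $|F_k(r^*)| \ge k+1$. Iterating through $k = n-1$ produces a witness $r$ with $|F_{n-1}(r)| = n$, i.e., a universal root of $A_{n-1}$, which is more than enough to conclude nonsplit.

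The main obstacle is the hard subcase where the forward-reach of the current witness is already $G_k$-closed, so that every root of $G_k$ lies outside it and $r^*$ produces no growth at step $k$. I expect to handle this by \emph{swapping witnesses}: replacing $r^*$ with a root $r_k$ of $G_k$ and arguing that from step $k$ onwards its forward-reach strictly absorbs the roots of the remaining graphs, catching up in size thanks to the previously accumulated growth. Making this rigorous requires strengthening the induction hypothesis to propagate a \emph{pair} of candidates — the current large-reach node and an auxiliary root inherited from a prior graph — so that between them strict growth occurs at every one of the $n-1$ steps. Synchronising the two witnesses with the roots chosen by the adversary is the genuinely subtle point, and is where the rootedness and self-loop hypotheses interact essentially.
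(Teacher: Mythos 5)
Your plan rests on proving a strictly stronger statement than nonsplitness: that the product admits a single node $r$ with an edge \emph{to every node}. This stronger statement is in fact false, so no amount of ``witness swapping'' in the hard subcase can rescue the argument. Here is a counterexample with $n=4$. Let $G_1, G_2, G_3$ all have self-loops at every node, and let their remaining edges be $E(G_1)\setminus\{\text{loops}\} = \{(1,2),(2,3),(3,4)\}$ (rooted at $1$), $E(G_2)\setminus\{\text{loops}\} = \{(1,2),(1,4),(4,3)\}$ (rooted at $1$), and $E(G_3)\setminus\{\text{loops}\} = \{(3,1),(1,2),(1,4)\}$ (rooted at $3$). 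Computing the out-neighbourhoods $F_3(r)$ in $G_1\circ G_2\circ G_3$ one finds $F_3(1)=\{1,2,4\}$, $F_3(2)=\{1,2,3\}$, $F_3(3)=F_3(4)=\{1,3,4\}$. None of these equals $\{1,2,3,4\}$, so there is no universal root; yet every pair of nodes does have a common in-neighbour (e.g.\ node $1$ covers $\{1,2\}$, $\{1,4\}$, $\{2,4\}$; node $2$ covers $\{1,3\}$, $\{2,3\}$; node $3$ covers $\{1,4\}$, $\{3,4\}$), so the product is nonsplit exactly as the proposition asserts. This example also shows precisely why your induction claim $\max_r |F_k(r)|\ge\min(k+1,n)$ cannot be pushed through to $k=n-1$: here the unique maximiser at $k=2$ is $r=1$ with $F_2(1)=\{1,2,4\}$, which is $G_3$-closed, and the other forward-reaches never catch up. The hard subcase you flag is therefore not merely subtle — it is unresolvable, because the target statement is wrong.

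The conclusion you need (nonsplitness) must be proved directly, and the natural way to do it is \emph{backward}, tracking two sets simultaneously rather than a single forward-reach. For nodes $u,v$ and $1\le k\le n$, let $B_k(u)$ denote the set of nodes having an edge to $u$ in $G_k\circ\cdots\circ G_{n-1}$ (with $B_n(u)=\{u\}$), so that $B_k(u)=\In_{G_k}\bigl(B_{k+1}(u)\bigr)$; self-loops give $B_{k+1}(u)\subseteq B_k(u)$. Since $G_k$ is rooted, if a nonempty set $S$ does not contain the root of $G_k$ then $\In_{G_k}(S)\supsetneq S$ (the last step of a shortest path from the root into $S$ exhibits a new element). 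Now suppose, toward a contradiction, that $B_1(u)\cap B_1(v)=\emptyset$; then $B_k(u)\cap B_k(v)=\emptyset$ for every $k$. At each step $k$ the root of $G_k$ lies in at most one of the two disjoint sets $B_{k+1}(u)$, $B_{k+1}(v)$, so at least one of them strictly grows; hence $|B_1(u)|+|B_1(v)|\ge 2+(n-1)=n+1$, contradicting disjointness inside $[n]$. This gives nonsplitness directly. The crucial structural difference from your scheme is that here a single quantity — the combined size of a \emph{pair} of sets — is guaranteed to grow at every one of the $n-1$ steps, because the adversary's root can stall at most one of the two sets per round; a single forward-reach can stall and, as the example shows, need not be rescued by any other node.
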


\subsection{Convergence Rate and Convergence Time}

Following~\cite{OT11}, in the case convergence is achieved for some initial configuration~$x(0)$ 
	and some communication pattern, we introduce
	\begin{equation}\label{eq:conv:rate:def}
	\max_{p \in [n]}\ \limsup_{t\to\infty}\  \left\lVert x_p(t) - x_p^* \right\rVert^{1/t}
	\end{equation}
	where $x_p^* = \lim_{t\to\infty} x_p(t) $ and $\lVert . \lVert$ is any norm on $\IR^d$.
This quantity lies in $[0,1]$.
Moreover, it is independent of the norm $\lVert . \lVert$ because of the equivalence of norms in $\IR^d$.
	
For an algorithm that solves asymptotic consensus in a network model~${\cal  N}$, 
	we define its {\em convergence rate\/} $\varrho$ as the supremum
	of~\eqref{eq:conv:rate:def} over all  initial  configurations  and all
	communication patterns with graphs in~${\cal N}$.
	
Regarding approximate consensus and considering the infinity norm on $\IR^n$, we define the 
	{\em convergence time}, $T(\varepsilon)$, by
	\begin{equation}\label{eq:conv:time:def}
	\max_{k \in [d]}\ \inf \left\{ \tau \, : \,  \forall t \geq \tau, \ \delta \big( x^k(t) \big) \leq \varepsilon \, \delta \big( x^k(0) \big)  \right \} 
	\end{equation}
	where $\delta$ is the semi-norm on $\IR^n$ defined by $\delta (u_1, \dots, u_n) = \max_{p\in [n]} (u_p) -\min_{p\in [n]} (u_p)$.
	
\section{The Case of Dimension One}\label{sec:one}

We now briefly present our analysis techniques for the one-dimensional case, which we generalize
        to arbitrary dimensions in Sections~\ref{sec:multi_component} and~\ref{sec:centroid}.
In~\cite{CBFN16}, we proposed a new analysis of the 
	convex combination algorithms in the specific case of dimension one:
	We considered the property of {\em $\alpha$-safeness\/} for averaging algorithms
	which is a generalization of the lower bound condition on positive weights. 
This property focuses on the {\em interval of\/}  transmitted  {\em values\/} and not 
	on the linear  {\em functions\/} (stochastic matrices) applied in the averaging steps, 
	as done classically.
It thus captures the essential properties needed for contracting the range of 
	current values in the system.
This approach led us to propose the first algorithm for asymptotic consensus
	in dynamic rooted networks, with a convergence time that is linear in the 
	number of agents.

\subsection{Nonsplit Network Models}	

Let $\alpha \in ]0,1/2]$;
  	an averaging algorithm is {\em $\alpha $-safe\/} if at any round~$t$,  
	each agent adopts a new value within the interval formed
	by its neighbors in~$G_t$  not too close to the boundary: 
	\begin{equation}\label{eq:safeness}
	\alpha\, M_p(t) + (1-\alpha)\, m_p(t) \leq x_p(t+1) \leq  (1-\alpha) M_p(t) + \alpha \, m_p(t) \, ,
	\end{equation}
	where $m_p(t) = \min_ { q\in \In_p(t+1) } \big( x_q(t) \big) $  and $M_p(t) = \max_ { q\in \In_p(t+1) } \big( x_q(t) \big) $.

Besides, contracting the range of current values in the system is clearly a good mechanism
	to achieve agreement:
	an averaging algorithm is $c$-{\em contracting in\/} ${\cal N}$ if at each round~$t$ 
	of each of its executions with communication patterns in  ${\cal N}$,  
	we have
	$$\delta \big( x(t) \big) \leq c \, \delta \big( x(t-1) \big)  \, .$$
A result from~\cite{Cha13} states that the property of $c$-contraction is also sufficient
	to enforce the convergence of averaging algorithms.
Then the main point  lies in the fact that in a nonsplit network model, 
	an $\alpha$-safe averaging algorithm is $(1-\alpha)$-contracting.
We thus prove the following result.

\begin{thm}\label{thm:nonsplit}
In a nonsplit network model, an $\alpha$-safe averaging algorithm solves asymptotic consensus 
	with a convergence rate $\varrho \leq 1-\alpha$ and a convergence time  
	$T(\varepsilon) \leq \left \lceil \log_{\frac{1}{1-\alpha}} \frac{\delta(0)}{\varepsilon} \right \rceil$.
\end{thm}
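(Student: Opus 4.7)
The plan is to derive all three properties (convergence, agreement, validity) together with the quantitative bounds from a single geometric lemma: \emph{in a nonsplit communication graph at round~$t+1$, an $\alpha$-safe algorithm contracts $\delta$ by a factor at least $1-\alpha$}. Once that lemma is established, the theorem's claims (convergence rate $\leq 1-\alpha$ and convergence time $\leq \lceil \log_{1/(1-\alpha)}(\delta(0)/\varepsilon) \rceil$) follow by iterating.

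The main step is the contraction lemma. Write $M=\max_p x_p(t)$ and $m=\min_p x_p(t)$, so that $\delta(x(t))=M-m$. Pick two agents $p,q$ in $[n]$. Since the graph $G_{t+1}$ is nonsplit, they share a common in-neighbor $r$, hence $x_r(t)$ lies in both $[m_p(t),M_p(t)]$ and $[m_q(t),M_q(t)]$. Applying the right-hand side of the $\alpha$-safeness inequality~\eqref{eq:safeness} to $p$ and using $m_p(t)\leq x_r(t)$ yields
\[
x_p(t+1)\leq (1-\alpha)M_p(t)+\alpha m_p(t)\leq (1-\alpha)M+\alpha x_r(t).
\]
Applying the left-hand side of~\eqref{eq:safeness} to $q$ and using $M_q(t)\geq x_r(t)$ similarly gives $x_q(t+1)\geq \alpha x_r(t)+(1-\alpha)m$. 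Subtracting, the $x_r(t)$ terms cancel and we obtain $x_p(t+1)-x_q(t+1)\leq (1-\alpha)(M-m)$. Maximizing over $p,q$ proves $\delta(x(t+1))\leq (1-\alpha)\,\delta(x(t))$.

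From this, convergence of each $x_p(t)$ follows from the cited result of~\cite{Cha13} which asserts that any $c$-contracting averaging algorithm converges. Validity is automatic from~\eqref{eq:safeness}: each $x_p(t+1)$ is a convex combination of the $x_q(t)$ for $q\in\In_p(t+1)$, so by induction every position stays inside $\hull\{x_1(0),\dots,x_n(0)\}$, and the limits inherit this property. Iterating the contraction lemma yields $\delta(x(t))\leq (1-\alpha)^t\delta(x(0))$, which forces $\delta(x(t))\to 0$ and hence agreement of all limits.

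Finally, both quantitative bounds follow directly. Since $x_p^*\in[m(t),M(t)]$ for every $t$ (by validity applied from time $t$ onward), we have $\lvert x_p(t)-x_p^*\rvert\leq \delta(x(t))\leq (1-\alpha)^t\delta(x(0))$, giving $\varrho\leq 1-\alpha$ after taking $t$-th roots and passing to the limsup; the same geometric bound gives the convergence time by solving $(1-\alpha)^t\leq\varepsilon$. The only subtle step in the whole argument is the contraction lemma, and more specifically the need to use a \emph{single} common in-neighbor $r$ to cancel out the unknown intermediate value $x_r(t)$ on both sides; this is exactly where the nonsplit hypothesis is used and is where I expect the main conceptual content of the proof to lie.
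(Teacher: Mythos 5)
Your proof is correct and follows the paper's stated approach: reduce to showing that $\alpha$-safeness plus the nonsplit hypothesis gives $(1-\alpha)$-contraction, then invoke the cited result of~\cite{Cha13} for convergence and iterate to obtain the rate and time bounds. The paper only asserts the contraction step without spelling it out, and your argument via the common in-neighbor~$r$, whose intermediate value cancels between the two safeness inequalities applied to~$p$ and~$q$, is exactly the intended mechanism.
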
	

As an immediate consequence of Theorem~\ref{thm:nonsplit}, we obtain that the
	 EqualNeighbor algorithm, that is $(1/n)$-safe,  has a convergence rate bounded
	 by $1-1/n$ and a convergence time in $O\left( n \log  \frac{\delta(0)}{\varepsilon}  \right )$ 
	 in any nonsplit network model.
	 
To improve these bounds, we introduced  the  {\em MidPoint algorithm\/}
	in which weights depend on the set of transmitted values and not on the sole communication graph:
	each agent adopts the mid-point of the range of values it has received, that is
	$$ x_p(t+1) = \frac{m_p(t) + M_p(t)}{2} \, .$$ 
Clearly the MidPoint algorithm is $1/2 $-safe, and so has a maximal contraction rate of $1/2$
	in any nonsplit network model, leading to a convergence rate of $1/2$ and 
	a convergence time  $T(\varepsilon) \leq \left \lceil \log_{2}  \frac{\delta(0)}{\varepsilon} \right \rceil$.

\subsection{Rooted Network Models}

One can easily 	show that if an averaging algorithm is $\alpha$-safe, then it is 
	$\alpha^{L}$-safe with the coarser-grained granularity of macro-rounds
	composed of $L$ consecutive rounds.
Combined with Proposition~\ref{prop:productrooted}, it follows that the EqualNeighbor 
	algorithm solves asymptotic consensus in any rooted network model.
Unfortunately,  the convergence rate and convergence time satisfy
	 $$ 1-\varrho = \Omega \big( n^{-n} \big) \ \mbox{ and } \ T(\varepsilon) = O\left( n^n \log  \frac{\delta(0)}{\varepsilon}  \right )      \, , $$
	 and these exponential  bounds have been proved to be tight~\cite{CBFN15}.

To overcome this time-complexity lower bound of averaging algorithms, we introduced
	the {\em amortization technique\/} \cite{CBFN16} which consists in inserting 
	a value-gathering phase of $n-1$ rounds before each averaging step.
This additional phase transforms $\alpha$-safe algorithms into ``turbo versions'' of themselves
	in that convergence times pass from being exponential to being polynomial in 
	the number of agents.
Amortization assumes implicitly that all agents know the size~$n$ of the network. 
Moreover, it requires  {\em a priori\/} to increase bandwidth channels and local storage capacities by a factor~$n$.
	
In anonymous networks, the amortization technique applies only to averaging algorithms 
	with weights that depend only on the {\em sets\/} of received values without any multiplicity
	concern. 
In contrast to the EqualNeighbor algorithm, MidPoint thus admits an amortized version, 
	called the {\em Amortized MidPoint\/} algorithm.
For its correctness and time-analysis, we just need to observe that the  Amortized MidPoint algorithm 
	reduces to the  MidPoint  algorithm
	 with the granularity of macro-rounds consisting in  blocks of $n-1$ consecutive rounds.
	
\begin{algorithm}[h]
\small
\begin{algorithmic}[1]
\INITIALLY{}
  \STATE $x_p \gets$ initial position of $p$
  \STATE $m_p \gets x_p$; $M_p \gets x_p$ 
\ROUND{}
  \STATE send $(m_p,M_p)$ to all agents in $\Out_p(t)$ and receive $(m_q,M_q)$ from all agents $q$ in $\In_p(t)$
  \STATE $m_p \gets \min\big\{ m_q \mid q \in \In_p(t)\big\}$;  $M_p \gets \max\big\{ M_q \mid q \in \In_p(t)\big\}$
  \IF{$t \equiv 0 \mod n-1$}
    \STATE $x_p \gets (m_p+M_p)/2$
    \STATE $m_p \gets x_p$; $M_p \gets x_p$
  \ENDIF
\end{algorithmic}
\caption{Amortized MidPoint algorithm for agent $p$}
\label{algo:mid}
\end{algorithm}

\begin{thm}\label{thm:amormp}
In a rooted network model, the Amortized MidPoint algorithm solves 
	asymptotic consensus with  convergence rate $\varrho \leq 1- \frac{1}{2n}$ 
	and  convergence time $ T(\varepsilon) \leq ( n-1) \left\lceil \log_2  \frac{\delta(0)}{\varepsilon}  \right\rceil $.
\end{thm}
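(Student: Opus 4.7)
The plan is to interpret the Amortized MidPoint algorithm at the coarser granularity of macro-rounds of length $n-1$ and to reduce its analysis to the already-proven Theorem~\ref{thm:nonsplit} applied to the plain MidPoint algorithm. First I would show by induction on the round index $k$ within macro-round $s$ that after round $(s-1)(n-1)+k$ the local variables $m_p, M_p$ of agent $p$ are respectively the minimum and maximum of the values $\big\{ x_q\big((s-1)(n-1)\big) : q \in \In_p^{H^{(k)}_s}\big\}$, where $H^{(k)}_s = G_{(s-1)(n-1)+1} \circ \cdots \circ G_{(s-1)(n-1)+k}$ is the product of the first $k$ round-graphs of macro-round $s$. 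The inductive step is immediate from line~5 of Algorithm~\ref{algo:mid}, and the reset at line~8 guarantees that no stale information from previous macro-rounds interferes.

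Setting $H_s := H^{(n-1)}_s$, line~7 executed at round $s(n-1)$ performs $x_p \gets \tfrac{1}{2}(m_p + M_p)$, which is exactly the MidPoint update with respect to the graph $H_s$. Thus, viewed at the macro-round scale, the Amortized MidPoint algorithm coincides with the MidPoint algorithm applied to the derived network model $\widetilde{\mathcal{N}} = \{G_1 \circ \cdots \circ G_{n-1} : G_i \in \mathcal{N}\}$. Since each $G_i \in \mathcal{N}$ is rooted with self-loops at every node, Proposition~\ref{prop:productrooted} guarantees that every graph in $\widetilde{\mathcal{N}}$ is nonsplit. As MidPoint is $\tfrac{1}{2}$-safe, Theorem~\ref{thm:nonsplit} applied to $\widetilde{\mathcal{N}}$ with $\alpha = 1/2$ yields asymptotic consensus with a macro-round convergence rate bounded by $1/2$ and a macro-round convergence time bounded by $\lceil \log_2 \frac{\delta(0)}{\varepsilon}\rceil$.

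It remains to translate these bounds back to the original round scale. The convergence-time bound simply multiplies by $n-1$, giving $T(\varepsilon) \leq (n-1)\lceil \log_2 \frac{\delta(0)}{\varepsilon}\rceil$. For the convergence rate, the fact that $\delta$ is constant during the gathering rounds and contracts by a factor of at most $1/2$ at round $s(n-1)$ yields $\|x_p(t) - x_p^*\| \leq (1/2)^{\lfloor t/(n-1)\rfloor}\delta(0)$, so $\varrho \leq (1/2)^{1/(n-1)}$. The elementary inequality $(1/2)^{1/(n-1)} \leq 1 - \tfrac{1}{2n}$ (for instance from $-\ln(1-x) \leq x/(1-x)$ applied at $x = 1/(2n)$) then supplies the stated bound. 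The only slightly delicate step is the inductive invariant of the first paragraph, which has to match exactly the notion of graph product used in Proposition~\ref{prop:productrooted}; once it is in place, the theorem follows by direct application of the tools already established.
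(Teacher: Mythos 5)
Your proof is correct and follows exactly the approach the paper takes (and only sketches in a single sentence): reinterpret the algorithm at macro-round granularity as plain MidPoint, invoke Proposition~\ref{prop:productrooted} to get nonsplit macro-round graphs, and apply Theorem~\ref{thm:nonsplit} with $\alpha=1/2$ before rescaling to ordinary rounds. Your careful inductive invariant on $m_p,M_p$ versus the product graph, and the elementary verification that $(1/2)^{1/(n-1)}\le 1-\tfrac{1}{2n}$, are precisely the details the paper leaves implicit.
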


Under the assumption that all agents know~$n$,
	the Amortized MidPoint  algorithm thus solves asymptotic consensus in linear-time and 
	with only two  values per  agent and per message.
A similar result has been recently obtained by Olshevsky~\cite{Ols15} with a linear-time algorithm,
	but this algorithm works only with a fixed communication graph that further ought to be 
	bidirectional and connected.

\section{\!\!Component-Wise Algorithms for the Multi-Dimensional Case}\label{sec:multi_component}

We now tackle the problem of  multi-dimensional asymptotic consensus,
	and present several algorithms that are all generalizations of MidPoint
	to higher dimension.
For the analysis of these algorithms, we proceed component by component:
	a $d$-dimensional execution is equivalent to $d$ one-dimensional executions.
In particular, we extend the property of $\alpha$-safeness, $\alpha \in [0,1/2]$, to a higher dimension by
	enforcing~(\ref{eq:safeness}) along each dimension.
Formally, a convex combination algorithm in dimension~$d$ is {\em 
	$\alpha $-safe\/} if for any~$t\in \IN$,
	\begin{align}
 	 \alpha \, M_{p,i}(t) + (1-\alpha) \, m_{p,i}(t) \leq x_{p,i}(t+1) \leq  (1-\alpha) \, M_{p,i}(t) + \alpha \, m_{p,i}(t)\label{eq:alpha:higher}
	\end{align}
	where $m_{p,i}(t)$ is the minimum and $M_{p,i}(t)$ the maximum of the values 
	$\{x_{p,i}(t) \mid q\in\In_p(t+1) \}$ in the $i$\textsuperscript{th} component of
	the positions of the in-neighbors of~$p$ in round~$t+1$, respectively.
Although this definition syntactically depends on the chosen coordinate system, it is in fact coordinate-free.
This can be seen by applying, to the set of agent positions, the inverse
of the transformation taking one coordinate system to another.
Also note that, in contrast to the one-dimensional case, \eqref{eq:alpha:higher}
	does {\em not\/} guarantee that the algorithm is a convex combination algorithm.
	
With this definition, Theorem~\ref{thm:nonsplit} holds in higher dimension.
Its proof is exactly the same, applied in each component.

Like in one dimension, one may use the amortization technique in higher dimension to go
	from nonsplit  to rooted network models by paying a
	multiplicative price of $n-1$ in terms of convergence time.
It requires all agents to know the size~$n$ 
	of the network and applies to  convex combination algorithms in which multiplicity is not taken into
	account in the weights of the position update rules.
Also, it requires  {\em a priori\/} to increase channel bandwidth and local storage capacities by a factor~$n$.


\subsection{Asymptotic Consensus in Dimension Two}\label{sec:dimtwo}

A component-wise application of the MidPoint algorithm is obviously  $1/2$-safe.
Unfortunately the following example shows that it is not 
	a convex combination algorithm when $d\geq3$, and thus may violate the validity
	clause: the convex hull of the points $(1,0,0)\,,\,(0,1,0)\,,\, (0,0,1)$ in~$\IR^3$ 
	does not contain the 
	component-wise midpoint $M=(1/2\,,\,1/2\,,\,1/2)$.
	
Nonetheless,  the following lemma shows that taking the component-wise midpoint does
	not exit the convex hull in dimension two.

\begin{lem}\label{lem:component:midpoint:r2}
Let~$C$ be a nonempty compact convex set in~$\IR^2$.
Then $\big( \frac{x^+_1 - x^-_1}{2}\,,\,\frac{x^+_2 - x^-_2}{2} \big) \in C$ where
$$
x^+_i = \max \{x_i \mid \exists y \in C \colon y_i = x_i\}   \  \mbox{ and }\ 
x^-_i = \min \{x_i \mid \exists y \in C \colon y_i = x_i\}   \, .
$$
\end{lem}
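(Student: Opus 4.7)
The displayed expression $\big(\tfrac{x_1^+ - x_1^-}{2}, \tfrac{x_2^+ - x_2^-}{2}\big)$ must be a typographical slip for the component-wise midpoint $\big(\tfrac{x_1^+ + x_1^-}{2}, \tfrac{x_2^+ + x_2^-}{2}\big)$; I prove the lemma in that form. First I would translate $C$ so that $x_i^+ = a_i$ and $x_i^- = -a_i$ for some $a_1, a_2 \ge 0$, reducing the claim to $0 \in C$. By definition of $x_i^\pm$, there exist points $A, B, U, V \in C$ realising the four extrema: $A_1 = -a_1$, $B_1 = a_1$, $U_2 = -a_2$, $V_2 = a_2$, with the free coordinates lying in $[-a_2, a_2]$ and $[-a_1, a_1]$ respectively. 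The degenerate cases $a_1 = 0$ or $a_2 = 0$ are immediate, since $C$ then reduces to a segment containing the origin by convexity of $[U, V]$ or $[A, B]$; I henceforth assume $a_1, a_2 > 0$.

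\textbf{Main step.} Suppose for contradiction that $0 \notin C$. Then $0$ does not lie in the compact convex set $K = \mathrm{conv}\{A, B, U, V\} \subseteq C$, so strict separation provides $v = (v_1, v_2) \ne 0$ with $\langle v, P\rangle > 0$ for each $P \in \{A, B, U, V\}$. Written out, the four inequalities are
\[
v_2 A_2 > v_1 a_1, \quad v_2 B_2 > -v_1 a_1, \quad v_1 U_1 > v_2 a_2, \quad v_1 V_1 > -v_2 a_2.
\]
Using the uniform bounds $|v_2 A_2|, |v_2 B_2| \le |v_2| a_2$ in the first pair yields $|v_1| a_1 < |v_2| a_2$; symmetrically, $|v_1 U_1|, |v_1 V_1| \le |v_1| a_1$ applied to the last pair yields $|v_2| a_2 < |v_1| a_1$. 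These two bounds contradict each other, so $0 \in K \subseteq C$, as required.

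\textbf{Expected obstacle.} The setup and the degenerate cases are routine; the substance is the collapse of the four separation inequalities into the two opposing bounds $|v_1| a_1 < |v_2| a_2$ and $|v_2| a_2 < |v_1| a_1$. The one thing to watch is that the cross-terms $v_j x_k$ must be bounded uniformly by $|v_j| a_k$, so that no case split on the signs of $v_1, v_2$ is required. The argument is intrinsically two-dimensional: in $\IR^3$ one would need $2 \cdot 3 = 6$ extremal points, and a single separating direction no longer produces enough independent inequalities to derive a contradiction --- consistent with the $(1,0,0), (0,1,0), (0,0,1)$ counterexample given just before the lemma.
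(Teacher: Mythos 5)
Your proof is correct, and you are right that the displayed $-$ is a typo for $+$: the paper's own proof normalizes to $x_i^- = 0$, $x_i^+ = 1$ and then shows $(1/2, 1/2) \in C$. The paper, however, takes a genuinely different and more constructive route. It picks a point $a = (0, a_2)$ with minimal first component and a point $b = (b_1, 0)$ with minimal second component, intersects the segment $[a, b]$ with the first median to obtain $c = (\alpha, \alpha)$ with $\alpha \le 1/2$, symmetrically obtains $c' = (\beta, \beta)$ with $\beta \ge 1/2$, and then exhibits $(1/2,1/2)$ explicitly as the convex combination $\frac{\beta - 1/2}{\beta - \alpha}\, c + \frac{1/2 - \alpha}{\beta - \alpha}\, c'$. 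Your argument instead centers the set at the purported midpoint, assumes $0 \notin C$, separates $0$ from the convex hull of the four extremal points, and collapses the four separation inequalities into the incompatible bounds $|v_1| a_1 < |v_2| a_2 < |v_1| a_1$. What the paper's proof buys is a genuinely constructive exhibition of the midpoint as a convex combination of points of $C$, using nothing beyond elementary convexity; what yours buys is a shorter, more symmetric derivation at the cost of invoking the separation theorem. Both hinge on the same two-dimensional phenomenon --- in $\IR^2$ the four coordinate extrema yield enough constraints to pin down the midpoint, which fails for $d \ge 3$ --- and your closing remark about why the argument breaks down in higher dimension is consistent with the $(1,0,0), (0,1,0), (0,0,1)$ counterexample the paper gives just before the lemma.
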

\begin{proof}
Without loss of generality, we assume $x_i^-  = 0$ and $x_i^+  = 1$ for $i=1,2$ by scaling 
	and translation, and we shall show  that $m = (1/2,1/2)\in C$.

Let $a = (0,a_2)\in C$ be a point with minimal first component
	and $b = (b_1,0)\in C$ one with minimal second component.
Intersecting the segment $\left\{ \big(\lambda b_1 , (1-\lambda)a_2\big) \mid \lambda\in[0,1] \right\}$
	that joins~$a$ to~$b$ with the first median, we get the point $c \in C$ with coordinates:
	$$c = \left \{ \begin{array}{ll}
	\left( \frac{b_1a_2}{b_1+a_2} , \frac{b_1a_2}{b_1+a_2} \right) & \mbox{ if } b_1\neq 0 \mbox{  or } a_2\neq 0, \\ \\
	(0,0) & \mbox{  if } b_1=a_2=0 \, .
	\end{array} \right. $$
In both cases, since $b_1a_2\leq \min\{b_1,a_2\}\leq b_1+a_2$, we have  $c= (\alpha, \alpha)$ with $\alpha\leq 1/2$.

A symmetric argument for two points with maximal coordinates yields a point $c'$ in $C$ 
	such that $c'= (\beta, \beta)$  with $\beta \geq 1/2$.
Observing that 
	$$ \frac{1}{2} = \frac{\beta - 1/2}{\beta-\alpha}\cdot \alpha +  \frac{1/2 - \alpha}{\beta-\alpha} \cdot \beta \, ,$$
	we then write $m$ as a convex combination of the two points $c$ and $c'\!$, which shows that
	$m$  is in~$C$.
\end{proof}

Consequently,  the component-wise MidPoint algorithm actually {\em is\/} a convex 
	combination algorithm in dimension two.
By analyzing each component separately, our results on the MidPoint algorithm carry 
	over from the one-dimensional to the two-dimensional case.
In particular,  we can apply the amortization technique, which yields the following result.

\begin{thm}\label{thm:mid:2}
In the particular case of dimension two, the component-wise MidPoint algorithm
	solves asymptotic consensus in any rooted network model with convergence rate 
	$\varrho \leq 1 - \frac{1}{2n}$ and convergence time
	$T(\varepsilon)  \leq (n-1) \left\lceil \log_{2} \frac{\delta(0)}{\varepsilon}  \right\rceil$.
\end{thm}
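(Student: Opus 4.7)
The plan is to argue that, thanks to Lemma~\ref{lem:component:midpoint:r2}, in dimension two the component-wise MidPoint rule does give a legitimate convex combination algorithm, and then to piggy-back on the one-dimensional analysis (Theorem~\ref{thm:amormp}) applied separately to each of the two coordinates.

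First I would observe that validity is not automatic: taking the component-wise midpoint of the in-neighbors' positions would in general fall outside their convex hull (as the $(1,0,0),(0,1,0),(0,0,1)$ example shows in dimension three). However, applying Lemma~\ref{lem:component:midpoint:r2} to the convex hull $C = \hull\{x_q(t) : q \in \In_p(t+1)\}$, one immediately gets that in dimension two the component-wise midpoint of the extremal projections, which is exactly the update value prescribed by the algorithm, still lies in $C$. Hence the component-wise MidPoint algorithm is a bona fide convex combination algorithm in dimension two, so the validity clause is satisfied and $x_p(t+1)$ is always a convex combination of in-neighbor positions.

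Next I would check $\alpha$-safeness in the sense of~\eqref{eq:alpha:higher}. Along each coordinate $i \in \{1,2\}$, the update rule is $x_{p,i}(t+1) = (m_{p,i}(t) + M_{p,i}(t))/2$, which is exactly the one-dimensional MidPoint update. This is clearly $1/2$-safe in each component, so the component-wise MidPoint is $1/2$-safe in the multidimensional sense. Because the update depends only on the set (through the per-coordinate min and max) of received positions and not on their multiplicities, the amortization technique from the one-dimensional case applies verbatim: the Amortized variant pre-pends a value-gathering phase of $n-1$ rounds where agents propagate coordinate-wise $\min$ and $\max$ before performing the midpoint update, exactly as in Algorithm~\ref{algo:mid} but with the scalars $m_p, M_p$ replaced by componentwise vectors.

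I would then interpret the Amortized algorithm on the coarser time-scale of macro-rounds of length $n-1$. By Proposition~\ref{prop:productrooted}, every communication graph at the macro-round level is nonsplit. Since the algorithm restricted to macro-rounds is nothing but the (non-amortized) component-wise MidPoint algorithm on a nonsplit network model, and since it is $1/2$-safe, Theorem~\ref{thm:nonsplit} applied component by component yields $\delta(x^k(s)) \leq (1/2)\,\delta(x^k(s-1))$ for each component $k$ at every macro-round $s$. This gives the convergence time bound $T(\varepsilon) \leq (n-1)\lceil \log_2 \delta(0)/\varepsilon\rceil$ and the per-round convergence rate bound $\varrho \leq (1/2)^{1/(n-1)} \leq 1 - 1/(2n)$, the latter elementary inequality being the same one used in the proof of Theorem~\ref{thm:amormp}.

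The only step that really required new work is the validity step, which is taken care of by Lemma~\ref{lem:component:midpoint:r2}; once validity is in hand, everything else is just the one-dimensional story replayed in each of the two coordinates. I do not see any real obstacle, since no coupling between the two coordinates arises in the analysis.
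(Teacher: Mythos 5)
Your proof is correct and follows essentially the same route as the paper: invoke Lemma~\ref{lem:component:midpoint:r2} to establish that in dimension two the component-wise midpoint is still a convex combination (hence validity), observe component-wise $1/2$-safeness, apply the amortization technique at the macro-round granularity, and use Proposition~\ref{prop:productrooted} together with Theorem~\ref{thm:nonsplit} coordinate by coordinate to obtain the stated bounds. The only notable difference is that you spell out the per-round rate computation $(1/2)^{1/(n-1)} \le 1-\frac{1}{2n}$, which the paper leaves implicit by appealing to Theorem~\ref{thm:amormp}.
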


Observe that the component-wise mid-point depends on the chosen coordinate system.

%
%
%
%
%

\subsection{The ExtremePoint Algorithm}\label{sec:extremepoint}

We now introduce an algorithm, called the {\em ExtremePoint algorithm},
	that generalizes the MidPoint algorithm in arbitrary dimension.
In this algorithm, every agent collects its in-neighbors' positions,
	identifies among them two extreme points in each component, 
	and then averages over these $2d$ extreme positions.
	
For each component, the update rule is an average of exactly~$2d$ real numbers.
We thus easily check that the ExtremePoint algorithm is $1/(2d)$-safe.
From Theorem~\ref{thm:nonsplit}, we derive that  in any nonsplit network model,
	the ExtremePoint algorithm achieves asymptotic consensus with  
	$\varrho \leq 1-\frac{1}{2d}$ and
	$T(\varepsilon) \leq \left \lceil \log_{2d/(2d-1)}  \frac{\delta(0)}{\varepsilon} \right \rceil$.

As with MidPoint, the weights in  the ExtremePoint algorithm 
	depend only on the {\em sets\/} of received positions without any multiplicity.
The algorithm thus admits an  amortized version given in Algorithm~\ref{algo:mid:2d}.
During the position-gathering  phase, $p$ keeps track of the positions 
	of  two in-neighbors with the smallest and
	the largest $i^{\text{th}}$ component, for every component~$i$.
Hence, $p$ records exactly~$2d$ points in each round,  a number 
	independent of $n$.
Then $p$ moves to the centroid of these $2d$ extreme points.

\begin{algorithm}[h]
\small
\begin{algorithmic}[1]
\INITIALLY{}
  \STATE $x_p \gets$ initial position of~$p$
  \STATE $m_p^{(1)},m_p^{(2)},\dots, m_p^{(d)} \gets x_p$; $M_p^{(1)},M_p^{(2)},\dots, M_p^{(d)} \gets x_p$ 
\ROUND{}
  \STATE send $\big(m_p^{(1)},\dots, m_p^{(d)},M_p^{(1)},\dots, M_p^{(d)}\big)$
	to all agents in $\Out_p(t)$ and\\receive $\big(m_p^{(1)},\dots,
m_p^{(d)},M_p^{(1)},\dots, M_p^{(d)}\big)$ from all agents $q$ in $\In_p(t)$
  \FOR{$i\gets 1$ to $d$}
    \STATE $m_p^{(i)} \gets m_q^{(i)}$ with minimal $i$\textsuperscript{th} component $m_{q,i}^{(i)}$ where $q\in\In_p(t)$
    \STATE $M_p^{(i)} \gets M_q^{(i)}$ with maximal $i$\textsuperscript{th} component $M_{q,i}^{(i)}$ where $q\in\In_p(t)$
  \ENDFOR
  \IF{$t \equiv 0 \mod n-1$}
    \STATE $x_p \gets \frac{1}{2d}\left( 
	\sum_{i=1}^d m_p^{(i)} +
	\sum_{i=1}^d M_p^{(i)} 
	\right)$
    \FOR{$i\gets 1$ to $d$}
      \STATE $m_p^{(i)} \gets x_p$; $M_p^{(i)} \gets x_p$
    \ENDFOR
  \ENDIF
\end{algorithmic}
\caption{Amortized ExtremePoint algorithm for agent $p$}
\label{algo:mid:2d}
\end{algorithm}

By Proposition~\ref{prop:productrooted}, the communication graph in each macro-round of $n-1$ 
	rounds is nonsplit.
Combined with Theorem~\ref{thm:nonsplit}, we obtain the following theorem.

\begin{thm}\label{thm:mid:2d}
In a rooted network model, the Amortized ExtremePoint algorithm
	solves asymptotic consensus with convergence rate $\varrho \leq 1 - \frac{1}{2dn}$
	and convergence time
	$T(\varepsilon)  \leq (n-1) \left\lceil \log_{2d/(2d-1)} \frac{\delta(0)}{\varepsilon} \right\rceil$.
\end{thm}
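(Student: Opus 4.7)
The plan is to lift the analysis from individual rounds to macro-rounds of length $L = n-1$ and then invoke Theorem~\ref{thm:nonsplit} at this coarser granularity. First I would observe that during the $n-1$ rounds of a macro-round~$s$, only the state variables $m_p^{(i)}$ and $M_p^{(i)}$ evolve, whereas the position $x_p$ stays fixed until the last round. A short induction on the round $\ell \in \{1,\dots,n-1\}$ inside macro-round~$s$ shows that the value $m_p^{(i)}$ held by~$p$ after $\ell$ gathering steps is $x_q\bigl((s-1)(n-1)\bigr)$ for some agent $q$ that is an in-neighbor of~$p$ in the product $G_{(s-1)(n-1)+1} \circ \cdots \circ G_{(s-1)(n-1)+\ell}$, and whose $i$-th component is minimal among all such in-neighbors; the analogous statement holds for $M_p^{(i)}$. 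The inductive step uses the fact that the minimum of an entrywise minimum across two composable edge sets equals the minimum across the composed edge set.

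Applied at $\ell = n-1$, this shows that the position update performed at the end of macro-round~$s$ is exactly the non-amortized ExtremePoint rule applied to the in-neighbors of~$p$ in the composed graph $H_s = G_{(s-1)(n-1)+1} \circ \cdots \circ G_{s(n-1)}$. Since each $G_t$ has a rooted spanning tree and self-loops at every node, Proposition~\ref{prop:productrooted} ensures that $H_s$ is nonsplit. The ExtremePoint rule is $1/(2d)$-safe, as noted in Section~\ref{sec:extremepoint}, so at the macro-round scale the algorithm behaves as a $1/(2d)$-safe convex combination algorithm operating in a nonsplit network model. Theorem~\ref{thm:nonsplit} then yields a per-macro-round contraction factor of at most $1-\frac{1}{2d}$ and convergence in at most $\lceil \log_{2d/(2d-1)} \delta(0)/\varepsilon \rceil$ macro-rounds.

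It remains to translate these bounds back to the round scale. The convergence time bound is immediate: each macro-round contains $n-1$ rounds, giving $T(\varepsilon) \leq (n-1)\lceil \log_{2d/(2d-1)} \delta(0)/\varepsilon\rceil$. For the convergence rate, the per-round asymptotic rate induced by a per-macro-round contraction of $1-\frac{1}{2d}$ is $(1-\frac{1}{2d})^{1/(n-1)}$, and by Bernoulli's inequality,
\[
\Bigl(1-\frac{1}{2dn}\Bigr)^{n-1} \geq 1-\frac{n-1}{2dn} \geq 1-\frac{1}{2d},
\]
so taking $(n-1)$-th roots yields $\bigl(1-\frac{1}{2d}\bigr)^{1/(n-1)} \leq 1-\frac{1}{2dn}$, as required.

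The main technical obstacle I anticipate is formalizing the inductive claim about the gathered candidates cleanly: one must verify that the entries $m_p^{(i)}$ and $M_p^{(i)}$ communicated during the gathering phase always encode genuine positions held at the start of macro-round~$s$, and that the componentwise selection rule at each step correctly propagates the minimum (respectively maximum) over the composed in-neighborhood. Once this bookkeeping is complete, everything else follows directly from Proposition~\ref{prop:productrooted} and Theorem~\ref{thm:nonsplit}.
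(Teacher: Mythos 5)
Your proposal is correct and follows exactly the paper's (very terse) argument: the amortized algorithm reduces to the non-amortized $1/(2d)$-safe ExtremePoint rule on the macro-round communication graph, which Proposition~\ref{prop:productrooted} guarantees to be nonsplit, after which Theorem~\ref{thm:nonsplit} gives the macro-round contraction and the stated bounds follow. The extra bookkeeping you supply (the induction showing that the gathering phase accumulates extrema over the composed graph, and the Bernoulli inequality converting the per-macro-round rate $(1-\tfrac{1}{2d})^{1/(n-1)}$ into the stated $1-\tfrac{1}{2dn}$) is precisely the material the paper leaves implicit.
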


Observe that in the above algorithm, new positions at each round (line 9) depend
	both on non-deterministic choices for the points $m_p^{(i)}$ and $M_p^{(i)}$
	(lines 5--6) and on the chosen coordinate system.

\section{The Multi-Dimensional Case: A Coordinate-Free Algorithm}\label{sec:centroid}

Both asymptotic consensus algorithms presented in Section~\ref{sec:multi_component}
 	treat agent positions component-wise, thus intrinsically assuming a \emph{common, 
 	agreed-upon coordinate system}.
The same applies for the work on multidimensional approximate consensus~\cite{MH13} 
	where convergence is obtained by cycling through the coordinate components, converging component by component.
While the assumption of a common coordinate system, depending on the application,
 	may be plausible in some man-made systems, the assumption is highly questionable
 	in natural systems such as swarms of birds or bacteria and social models in opinion dynamics.
	
We now present the {\em Centroid algorithm}, a generalization of the MidPoint algorithm 
	that is \emph{coordinate-free} in the sense that it does not require an a priori agreed-upon 
	coordinate system:
Each agent moves to the centroid of the {\em convex hull\/} of the positions of its in-neighbors
	in the current communication graph, with uniform mass distribution over the convex hull.  
While the ExtremePoint algorithm computes the centroid of a {\em finite\/} set of points with equal mass, 
	the Centroid algorithm computes the centroid of the whole convex hull of these points.
	
The main point of this section is to show that by spreading the mass to the convex hull, we obtain
        an algorithm that is $1/(d+1)$-safe.
We give the proof sketch in Section~\ref{sec:safeness_proof}.
        
\begin{thm}\label{thm:centroid}
The Centroid algorithm is a $1/(d+1)$-safe convex combination algorithm.
\end{thm}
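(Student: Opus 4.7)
The plan is to establish the two required properties separately. The convex-combination property is immediate: the centroid of any nonempty compact convex set $K\subset\IR^d$ lies in $K$, since it is a $K$-valued integral. For $1/(d+1)$-safeness, fix an agent $p$, a round $t$, and a coordinate $i$; let $K\subset\IR^d$ be the convex hull of the positions received by $p$ in round $t+1$, and set $m:=\min_{x\in K}x_i$ and $M:=\max_{x\in K}x_i$. An affine change of coordinate along the $i$-th axis reduces~\eqref{eq:alpha:higher} with $\alpha=1/(d+1)$ to showing $\centroid(K)_i\in[1/(d+1),\,d/(d+1)]$ under the normalization $m=0$, $M=1$. The reflection $x_i\mapsto 1-x_i$ exchanges the two bounds, so it suffices to prove the lower bound $\centroid(K)_i\geq 1/(d+1)$.

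I would next reduce to a one-dimensional inequality. Slicing $K$ by the hyperplanes $H_z=\{x_i=z\}$ and setting $A(z):=\vol_{d-1}(K\cap H_z)$, the Brunn--Minkowski inequality implies that $f(z):=A(z)^{1/(d-1)}$ is concave and non-negative on $[0,1]$. Since
\[
\centroid(K)_i \ =\ \frac{\int_0^1 z\,f(z)^{d-1}\,dz}{\int_0^1 f(z)^{d-1}\,dz},
\]
the claim reduces to the inequality
\[
\int_0^1 z\,f(z)^{d-1}\,dz \ \geq\ \frac{1}{d+1}\int_0^1 f(z)^{d-1}\,dz
\]
for every concave $f\colon[0,1]\to[0,\infty)$ not identically zero, with equality at the extremal profile $f(z)=c(1-z)$ (for which $K$ is an inverted pyramid with apex on $\{x_i=1\}$). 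The body of revolution $\{(z,y)\in\IR\times\IR^{d-1}:\|y\|_2\leq f(z)\}$ is a Steiner-type symmetrization of $K$ that preserves the cross-section areas along the $i$-axis, and hence the $i$-coordinate of the centroid.

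To prove the 1D inequality, I would exhibit a monotone deformation toward the extremal profile. Assume first $f(0)>0$ (otherwise replace $f$ by $f+\varepsilon(1-z)$ and let $\varepsilon\to 0$). Set $g(z):=f(0)(1-z)$ and $h:=f-g$; combining the chord inequality for the concave $f$ with $g(1)=0\leq f(1)$ yields $h\geq 0$ on $[0,1]$, and $h$ is concave with $h(0)=0$. For $t\in[0,1]$, define $f_t:=g+th$ and $\Phi(t):=\int z\,f_t^{d-1}\,dz\,/\,\int f_t^{d-1}\,dz$, so that $\Phi(0)=\int z(1-z)^{d-1}\,dz/\int(1-z)^{d-1}\,dz=1/(d+1)$ and $\Phi(1)=\centroid(K)_i$. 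Differentiating in $t$ and swapping variables in the double integral give
\[
\Phi'(t)\,\Bigl(\int f_t^{d-1}\,dz\Bigr)^{\!2} \ =\ \frac{d-1}{2}\iint_{[0,1]^2}(z-z')\Bigl[\frac{h(z)}{f_t(z)}-\frac{h(z')}{f_t(z')}\Bigr] f_t(z)^{d-1}\,f_t(z')^{d-1}\,dz\,dz',
\]
which is non-negative as soon as $h/f_t$ is non-decreasing in $z$. A short calculation produces the chain of identities $(h/f_t)' = g^2(h/g)'/f_t^2$ and $(h/g)' = (f/g)'$, so the sign reduces to that of $f'(z)(1-z)+f(z)$; the tangent-line bound for the concave $f$ at $z$, evaluated at $w=1$, gives $f'(z)(1-z)+f(z)\geq f(1)\geq 0$. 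Thus $\Phi'\geq 0$ throughout $[0,1]$ and $\centroid(K)_i=\Phi(1)\geq\Phi(0)=1/(d+1)$.

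The main obstacle is the 1D inequality. For $d\geq 3$ the functional $\Phi$ is not linear in $f$, so one cannot conclude by checking the inequality on the extreme rays of the cone of concave non-negative functions (tents and affine functions) as would work in dimension two. The crux of the argument is the linear homotopy $f_t=g+th$ between $f$ and the extremal pyramid profile $g$, together with the algebraic identity $(h/f_t)'\propto(h/g)'=(f/g)'$, which renders the monotonicity of $h/f_t$ independent of $t$ and reduces the entire monotonicity question to the single tangent-line estimate $f'(z)(1-z)+f(z)\geq f(1)$.
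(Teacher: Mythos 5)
Your proposal is correct, and it shares the paper's global structure but proves the decisive one-dimensional inequality by a genuinely different argument.

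The reduction is the same as in the paper: normalize so $m_j=0$, $M_j=1$; symmetrize about the $j$-th axis so the cross-sections become congruent up to scaling while the $j$-th centroid component is preserved; use Brunn--Minkowski to conclude that $f(z)=\vol_{d-1}(K\cap H_z)^{1/(d-1)}$ is concave on $[0,1]$; and by reflection reduce to one side of the inequality. (One small imprecision: the ball of radius $f(z)$ has $(d-1)$-volume $\omega_{d-1}f(z)^{d-1}$, so your body of revolution preserves cross-sectional areas only up to the constant $\omega_{d-1}$; the paper sidesteps this by using cubes $\Cube_\xi(f(z))$ of edge length $f(z)$. This does not affect the centroid formula, since the constant cancels.)

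Where you diverge is in the heart of the matter: proving that any non-negative concave $f$ on $[0,1]$ with $f\not\equiv 0$ satisfies $\int_0^1 z f^{d-1}\,dz \geq \frac{1}{d+1}\int_0^1 f^{d-1}\,dz$. The paper does this by two applications of a mass-shifting lemma (Lemma~\ref{lem:shift}) that reduce $f$ to a profile that is linear up to some $h$ and constant afterwards (a hyperpyramid topped by a box), followed by a direct optimization over that two-parameter family (Lemma~\ref{lem:max_centr}). Your route instead interpolates linearly between the extremal cone profile $g(z)=f(0)(1-z)$ and the given $f$ via $f_t=g+th$, and shows that the centroid functional $\Phi(t)$ is non-decreasing. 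The verification $\Phi(0)=1/(d+1)$ is a beta-integral computation, and $\Phi'\geq 0$ follows from the symmetrized double-integral identity together with the Chebyshev-type observation that $h/f_t$ is non-decreasing; the identity $(h/f_t)'=g^2(f/g)'/f_t^2$ makes this monotonicity independent of $t$, reducing the whole question to the tangent-line bound $f'(z)(1-z)+f(z)\geq f(1)\geq 0$ for concave $f$. This bypasses the paper's reduction to piecewise-linear profiles entirely, which I find cleaner; the paper's version has the advantage of a more explicit geometric picture (pyramid plus box) and of isolating a reusable shift lemma. Both arguments need the same routine care about regularity (one-sided derivatives for concave $f$, the $\varepsilon$-perturbation when $f(0)=0$, and the degenerate case $\vol_d(K)=0$, which the paper handles by ``a simple reduction to a smaller dimension'').
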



From Theorem~\ref{thm:nonsplit} we thus obtain a convergence rate of $1-\frac{1}{d+1}$ in nonsplit network models instead of $1-\frac{1}{2d}$ for the ExtremePoint algorithm.
Since the algorithm's update rule does not take into account any multiplicity, the Centroid algorithm
        admits an  amortized version  given in Algorithm~\ref{algo:mid:higher}.
We use $\hull(A)$ to denote the convex hull of a set $A \subseteq \IR^d$.
 
\begin{algorithm}[h!]
\small
\begin{algorithmic}[1]
\INITIALLY{}
  \STATE $x_p \gets$ initial position of $p$
  \STATE $C_p \gets \{x_p\}$ 
\ROUND{}
  \STATE send $C_p$ to all agents in $\Out_p(t)$ and receive $C_q$ from all agents $q$ in $\In_p(t)$
  \STATE $C_p \gets C_p \cup \bigcup_{q\in \In_p(t)} C_q$
  \IF{$t \equiv 0 \mod n-1$}
    \STATE $x_p \gets $ centroid of $\hull(C_p)$
    \STATE $C_p \gets \{x_p\}$
  \ENDIF
\end{algorithmic}
\caption{Amortized Centroid algorithm for agent $p$}
\label{algo:mid:higher}
\end{algorithm}

From Proposition~\ref{prop:productrooted} and Theorem~\ref{thm:centroid} we finally obtain the following result.
\begin{thm}\label{thm:algo:4}
In a rooted network model, the Amortized Centroid algorithm solves asymptotic consensus
	with convergence rate $\varrho \leq 1-\frac{1}{n\cdot (d+1)}$
 	and convergence time $T(\varepsilon) \le (n-1)\left\lceil\log_{(d+1)/d} \frac{\delta(0)}{\varepsilon} \right\rceil$.
\end{thm}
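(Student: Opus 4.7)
The plan is to reduce the statement to an application of Theorem~\ref{thm:nonsplit} after coarsening time into macro-rounds of $n-1$ consecutive rounds. First I would analyze what the Amortized Centroid algorithm does during one macro-round spanning rounds $t_0+1,\ldots,t_1 = t_0 + n -1$. The position $x_p$ is left untouched during the gathering phase and changes only at the very last round, while the record $C_p$ grows. By induction on the round index within the macro-round, at the start of round $t_1$ the set $C_p$ coincides with $\{x_q(t_0) \mid q \in \In_p^{H}\}$, where $H = G_{t_0+1} \circ \cdots \circ G_{t_1}$ is the product communication graph of the macro-round. The two ingredients here are that every node carries a self-loop (so that $x_p(t_0)$ itself stays in $C_p$) and that the product of communication graphs corresponds precisely to one flooding step of the $C_p$ sets. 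The centroid update then assigns $x_p(t_1) \gets \centroid(\hull(C_p))$. In other words, viewed at macro-round granularity, the Amortized Centroid algorithm simulates exactly one step of the (non-amortized) Centroid algorithm on the graph $H$.

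Next, since every graph of a rooted network model $\mathcal{N}$ is rooted and carries self-loops, Proposition~\ref{prop:productrooted} implies that $H$ is nonsplit, so the macro-round execution takes place in a nonsplit network model. Combining this with Theorem~\ref{thm:centroid}, which asserts that the Centroid algorithm is a $1/(d+1)$-safe convex combination algorithm, and applying Theorem~\ref{thm:nonsplit} at the coarsened granularity, I obtain a per-macro-round contraction factor at most $d/(d+1)$ and a macro-round convergence time at most $\lceil \log_{(d+1)/d}(\delta(0)/\varepsilon)\rceil$. Multiplying by the macro-round length $n-1$ yields the stated bound on $T(\varepsilon)$.

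For the per-round convergence rate, since $x_p$ is constant strictly between macro-rounds, for every $t$ that is a multiple of $n-1$ one has $\|x_p(t) - x_p^*\| \leq (d/(d+1))^{t/(n-1)} \delta(0)$, and the definition of $\varrho$ yields $\varrho \leq (d/(d+1))^{1/(n-1)}$. Bernoulli's inequality then gives $(1 - 1/(d+1))^{1/(n-1)} \leq 1 - 1/((n-1)(d+1)) \leq 1 - 1/(n(d+1))$, matching the claimed bound. The main obstacle is really the bookkeeping step: carefully proving by induction that $C_p$ at the end of the gathering phase records exactly the in-neighbor positions in the product graph $H$. Once that identification is in place, the rest reduces mechanically to Proposition~\ref{prop:productrooted}, Theorem~\ref{thm:centroid}, and Theorem~\ref{thm:nonsplit}.
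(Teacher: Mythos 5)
Your proposal follows essentially the same route as the paper: reduce to macro-rounds of length $n-1$, invoke Proposition~\ref{prop:productrooted} to see that the product graph of each macro-round is nonsplit, invoke Theorem~\ref{thm:centroid} for $1/(d+1)$-safeness, and then apply Theorem~\ref{thm:nonsplit} at macro-round granularity followed by the Bernoulli estimate to get the per-round rate. The paper states this as a one-line consequence; you supply the omitted bookkeeping (the flooding invariant identifying $C_p$ with $\{x_q(t_0)\mid q\in\In_p(H)\}$, which holds at the end of the gathering in round~$t_1$, i.e., after line~4, rather than at the start of round~$t_1$ as you wrote, a trivial slip), and the arithmetic $\bigl(1-\tfrac{1}{d+1}\bigr)^{1/(n-1)} \le 1-\tfrac{1}{(n-1)(d+1)} \le 1-\tfrac{1}{n(d+1)}$, which is exactly what the paper implicitly relies on.
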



While the centroid of a body $A$ cannot be efficiently computed in general, 
	we are in the case of $A$ being a convex bounded $d$-polytope with at most 
	$n$ vertices.
Although exact computation of the centroid has been shown to be $\#P$-hard even for 
	these bodies~\cite{Rad07}, polynomial (in $n$) algorithms based on simplex decompositions
	exist if one fixes the dimension~$d$.
Besides, natural systems may be equipped with natural means to determine centroids.
Since the Amortized Centroid algorithm relays all positions during its gathering phase,
        it a priori requires capabilities to store and relay up to $n$ positions per round.
This is in contrast to the MidPoint and the ExtremePoint Amortized algorithms.
Optimizations, however, exist that may pay off in certain applications:
        in code line~4, the non-extreme points of $\hull(C_p)$ can be removed from $C_p$.
While the frame, i.e., the set of extreme points, can be computed in polynomial time
        by solving linear programs~\cite{DH96},
        one may not be willing to pay this additional overhead in each round.
Alternatively, computationally less intensive heuristics can be applied
        to remove many of the non-extreme points, see, e.g., \cite{DT81}.

\subsection{Safeness Proof}\label{sec:safeness_proof}
We now tackle the proof of Theorem~\ref{thm:centroid}.
First let us introduce some notation.
Let us denote the $d$-dimensional volume of set $A \subseteq \IR^d$ by $\vol_d(A)$.
For $A \subseteq \IR^d$ and $j \in [d]$, let $m_j(A) = \inf_{x \in A}x_j$ and $M_j(A) = \sup_{x \in A}x_j$.
We next define sets, representing geometric bodies, that are symmetric around the first axis.
For each $\xi \in \IR$, let $H_\xi = \{x \in \IR^d \mid x_1 = \xi\}$ be the hyperplane
  in $\IR^d$ orthogonal to the first axis, intersecting it at $(\xi,0,\dots,0)$.
Let $\Cube_\xi(\gamma)$ be the $(d-1)$-cube of edge length~$\gamma$ that lies within
  hyperplane~$H_\xi$ and is centered at point $(\xi, 0, \dots, 0)$, i.e.,
$\Cube_\xi(\gamma) = \{ x\in \IR^d \mid x_1 = \xi \wedge \max_{2\le j\le d}|x_j| \le \gamma/2\}$.
For a function $\ell: \IR \to \IR_0^+$ we define the {\em symmetric body $S(\ell)$\/} as
\begin{align}\label{eq:Symm}
  S(\ell) = \bigcup_{\xi \in \IR}\Cube_\xi(\ell(\xi))\,.
\end{align}

Roughly speaking, we proceed as follows.
Each $\hull(C_p)$ in code line~6 is a bounded convex polytope in $\IR^d$.
Fix agent $p$ and component $i$ along which $\alpha$ in \eqref{eq:alpha:higher} is minimized.
We then use a Steiner-type symmetrization along the $i^{\text{th}}$ axis:
We transform polytope~$A = \hull(C_p)$ into polytope~$A' = \hull(C'_p)$ which is highly symmetric around the $i^{\text{th}}$ axis
  and whose $i^{\text{th}}$ centroid component is invariant under the transformation.
Figure~\ref{fig:symm} depicts the idea of the transformation in dimension two: the symmetric body $A'$ is constructed
  such that cuts orthogonal to the first axis of $A'$ have same volume as their corresponding cuts in $A$.
This ensures invariance of the first component of the centroid~$c$.
We then reduce the problem to the class of those $A'$ that are formed by a hyperpyramid extended by a $d$-box at its base.
Among these we show the hyperpyramids without $d$-boxes to minimize $\alpha$ in \eqref{eq:alpha:higher},
  finally reducing $A'$ to hyperpyramids.
From a lower bound on the distance of $\centroid_i(A')$ to its base, and the fact that the involved transformations and reductions
  did not shift $\centroid_i(A')$ away from its base, we are finally able to prove safeness.
  
\begin{figure}
\centering
\nop{ 
\begin{tikzpicture}[>=latex]
  \coordinate (p1) at (0,0);
  \coordinate (p2) at (0.5,1);
  \coordinate (p3) at (2,1.2);
  \coordinate (p4) at (3,-0.3);
  \coordinate (p5) at (1.5,-1);
  
  \draw[-, name path=poly] (p1) -- (p2) --(p3) -- (p4) -- (p5) -- (p1);

  \draw[->] (-0.5,-0.5) -- ++(4.1,0) node[right] {$x_1$};
  \draw[->] (-0.5,-0.5) -- ++(0,2) node[above] {$x_2$};

  \node (A) at (1.0,-1.4) {\small $A = \hull(C_p)$};


  \draw[dotted,thick] (2,-1) -- ++(0,2.5) node[pos=1.0,xshift=-0pt,yshift=5pt,shape=circle,inner sep=1pt,solid,draw] {\small 3};
  \draw[-,very thick,red] (2,-0.77) -- ++(0,1.97) node[pos=0.6,right,color=red] {};

  \draw[dotted,thick] (1,-1) -- ++(0,2.5) node[pos=1.0,xshift=-0pt,yshift=5pt,shape=circle,inner sep=1pt,solid,draw] {\small 2};
  \draw[-,very thick,red] (1,-0.70) -- ++(0,1.78) node[pos=0.6,right,color=red] {};

  \draw[dotted,thick] (0.2,-1) -- ++(0,2.5) node[pos=1.0,xshift=-0pt,yshift=5pt,shape=circle,inner sep=1pt,solid,draw] {\small 1};
  \draw[-,very thick,red] (0.2,-0.14) -- ++(0,0.53) node[pos=0.6,right,color=red] {};

  \draw (1.3,0.1) circle (2pt) node[below] {$C$};

  \path[name path=vertpath1] (0.5,-2) -- ++(0,4);
  \path[name intersections={of=poly and vertpath1}]
    let \p1 = ($(intersection-1)-(intersection-2)$), \n1 = {veclen(\x1,\y1)}, \n2= {divide(\n1,2)} in
    (0.5,-\n2)++(6,-0.5) coordinate (a1) -- ++ (0,\n1) coordinate (b1);

  \path[name path=vertpath2] (1.5,-2) -- ++(0,4);
  \path[name intersections={of=poly and vertpath2}]
    let \p1 = ($(intersection-1)-(intersection-2)$), \n1 = {veclen(\x1,\y1)}, \n2= {divide(\n1,2)} in
    (1.5,-\n2)++(6,-0.5) coordinate (a2) -- ++ (0,\n1) coordinate (b2);

  \path[name path=vertpath3] (2,-2) -- ++(0,4);
  \path[name intersections={of=poly and vertpath3}]
    let \p1 = ($(intersection-1)-(intersection-2)$), \n1 = {veclen(\x1,\y1)}, \n2= {divide(\n1,2)} in
    (2,-\n2)++(6,-0.5) coordinate (a3) -- ++ (0,\n1) coordinate (b3);

  \path[draw,->,snake=coil,line after snake=5pt, segment aspect=0,%
        segment length=7pt] (4,0.5) -- ++(1,0);
    
  \draw[-, name path=newpoly] (6,-0.5) -- (a1) --(a2) -- (a3) -- (9,-0.5) -- (b3) -- (b2) -- (b1) -- (6,-0.5);

  \draw[->] (-0.5,-0.5)++(6,0) -- ++(4.1,0) node[right] {$x_1$};
  \draw[->] (-0.5,-0.5)++(6,0) -- ++(0,2) node[above] {$x_2$};  

  \node (Anew) at (7.0,-2.0) {\small $A' = \hull(C'_p)$};


  \draw[dotted,thick] (8,-1.7) -- ++(0,2.4) node[pos=1.0,xshift=-0pt,yshift=5pt,shape=circle,inner sep=1pt,solid,draw] {\small 3};
  \draw[-,very thick,red] (a3) -- (b3) node[pos=0.7,right,color=red] {};

  \draw[dotted,thick] (7,-1.7) -- ++(0,2.4) node[pos=1.0,xshift=-0pt,yshift=5pt,shape=circle,inner sep=1pt,solid,draw] {\small 2};
  \draw[-,very thick,red] (7,-1.37) -- ++(0,1.75) node[pos=0.7,right,color=red] {};

  \draw[dotted,thick] (6.2,-1.7) -- ++(0,2.4) node[pos=1.0,xshift=-0pt,yshift=5pt,shape=circle,inner sep=1pt,solid,draw] {\small 1};
  \draw[-,very thick,red] (6.2,-0.77) -- ++(0,0.55) node[pos=0.7,right,color=red] {};
  
  \draw (7.3,-0.5) circle (2pt) node[below] {$C'$};  
\end{tikzpicture}
} 
\includegraphics{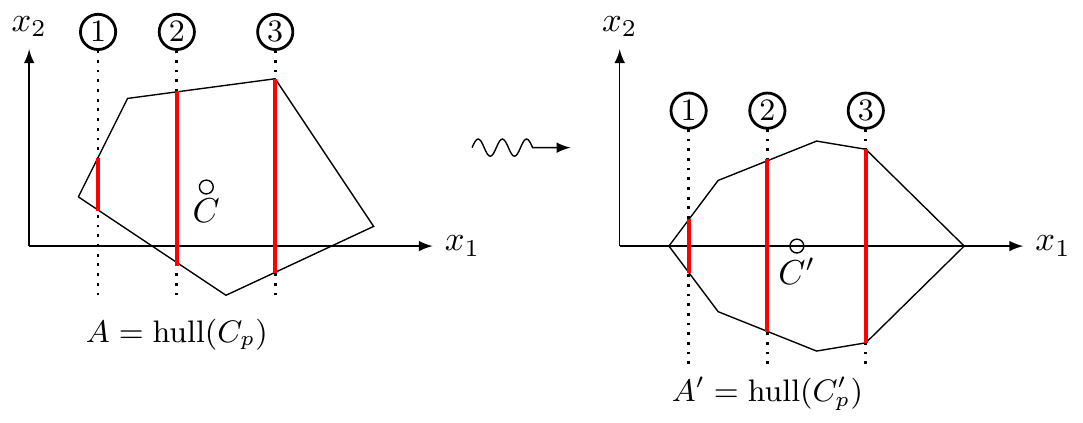}
\caption{Symmetrization of the polytope $A = \hull(C_p)$ around the first axis. The original polytope $A$ is transformed into the
  symmetric polytope $A' = \hull(C'_p)$ such that cuts orthogonal to the first axis have same volume.
  The transformation ensures invariance of the first component of the centroid~$C$.}\label{fig:symm}
\end{figure}  

We start with some auxiliary lemmas on symmetrized bodies.
First we show that if one removes parts from a body whose first component are left of the body's
  centroid, then the first component of the centroid moves to the right.
\begin{lem}\label{lem:shift}
Let $\ell,\ell': \IR \to \IR_0^+$.
If $\ell'(\xi) \le \ell(\xi)$ for $\xi \le \centroid_1(S(\ell))$ and
  $\ell'(\xi) = \ell(\xi)$ for $\xi > \centroid_1(S(\ell))$
  then $\centroid_1(S(\ell')) \ge \centroid_1(S(\ell))$.
\end{lem}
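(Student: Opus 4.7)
The plan is to reduce the claim to a one-dimensional comparison between two weighted densities and then to exploit the sign pattern forced by the hypotheses.

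First I would express both centroids as one-dimensional moment integrals. Since the slice $\Cube_\xi(\ell(\xi))$ lies in the hyperplane $H_\xi$ and has $(d{-}1)$-volume $\ell(\xi)^{d-1}$, the first component of the centroid of $S(\ell)$ is
\[
  \centroid_1(S(\ell)) \;=\; \frac{\int_\IR \xi\, \ell(\xi)^{d-1}\, d\xi}{\int_\IR \ell(\xi)^{d-1}\, d\xi}\,,
\]
and similarly for $\ell'$. Writing $c = \centroid_1(S(\ell))$, this identity is equivalent to the centred moment condition $\int_\IR (\xi - c)\,\ell(\xi)^{d-1}\, d\xi = 0$, and the inequality $\centroid_1(S(\ell')) \ge c$ is equivalent to $\int_\IR (\xi - c)\,\ell'(\xi)^{d-1}\, d\xi \ge 0$ (assuming the nondegeneracy $\int \ell'^{d-1} > 0$).

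Next I would subtract the two centred-moment integrals to obtain
\[
  \int_\IR (\xi - c)\, \ell'(\xi)^{d-1}\, d\xi
  \;=\; \int_\IR (\xi - c)\, \bigl(\ell'(\xi)^{d-1} - \ell(\xi)^{d-1}\bigr)\, d\xi\,,
\]
and check that the integrand on the right is pointwise nonnegative. For $\xi > c$ the hypothesis $\ell'(\xi) = \ell(\xi)$ makes the integrand vanish. For $\xi \le c$ we have $\xi - c \le 0$, while $\ell'(\xi) \le \ell(\xi)$ together with the monotonicity of $u \mapsto u^{d-1}$ on $\IR_0^+$ gives $\ell'(\xi)^{d-1} - \ell(\xi)^{d-1} \le 0$, so the product is nonnegative. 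This yields the desired inequality and hence $\centroid_1(S(\ell')) \ge \centroid_1(S(\ell))$.

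There is no real obstacle here beyond carefully setting up the slice formula for the centroid; the degenerate case in which $S(\ell')$ has zero volume is handled by observing that the hypothesis then forces $\ell = \ell' = 0$ almost everywhere on $(c,\infty)$ as well, which contradicts nondegeneracy of $S(\ell)$ unless both centroids agree trivially. The rest is a clean sign-chasing argument, so the lemma follows directly from the moment characterisation of the centroid.
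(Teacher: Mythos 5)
Your proof is correct and takes essentially the same approach as the paper's: both rest on the centered-moment identity $\int_\IR(\xi-c)\,\ell(\xi)^{d-1}\,d\xi=0$ for $c=\centroid_1(S(\ell))$, followed by a sign argument on $(-\infty,c]$, the region where $\ell'$ and $\ell$ may differ. Your packaging is marginally tidier (one integral with a pointwise nonnegative integrand rather than the paper's chain of inequalities obtained by splitting at $c$), but the key steps and the monotonicity of $u\mapsto u^{d-1}$ are used identically.
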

\begin{proof}
Abbreviate $c = \centroid_1(S(\ell))$ and $c' = \centroid_1(S(\ell'))$.
It is
\begin{align*}
  c = \frac{\int_{-\infty}^{c}\xi\ell(\xi)^{d-1}d\xi + \int_{c}^{\infty}\xi\ell(\xi)^{d-1}d\xi}{
            \int_{-\infty}^{c}\ell(\xi)^{d-1}d\xi + \int_{c}^{\infty}\ell(\xi)^{d-1}d\xi}\,.
\end{align*}
Algebraic manipulation yields
\begin{align}
\int_{-\infty}^{c}(c-\xi)\ell(\xi)^{d-1}d\xi =
  \int_{c}^{\infty}(\xi-c)\ell(\xi)^{d-1}d\xi \,.\label{eq:shift1}
\end{align}
Because $c-\xi \ge 0$ whenever $\xi \leq c$ and since $\ell'(\xi) \le \ell(\xi)$ for those~$\xi$, we get
\begin{align*} 
  \int_{-\infty}^{c}(c-\xi)\ell'(\xi)^{d-1}d\xi \le
\int_{-\infty}^{c}(c-\xi)\ell(\xi)^{d-1}d\xi\,.
\end{align*}
Together with \eqref{eq:shift1} this gives
\begin{align*} 
  \int_{-\infty}^{c}(c-\xi)\ell'(\xi)^{d-1}d\xi \le
\int_{c}^{\infty}(\xi-c)\ell(\xi)^{d-1}d\xi\,.
\end{align*}
Again, algebraic manipulation yields
\begin{align*}
  c \le \frac{\int_{-\infty}^{c}\xi\ell'(\xi)^{d-1}d\xi + \int_{c}^{\infty}\xi\ell(\xi)^{d-1}d\xi}{
              \int_{-\infty}^{c}\ell'(\xi)^{d-1}d\xi + \int_{c}^{\infty}\ell(\xi)^{d-1}d\xi}\,.
\end{align*}
The latter term is, in fact, equal to $c'$ because $\ell(\xi) = \ell'(\xi)$ for
$\xi > c$. 
This shows $c\leq c'$.
\end{proof}  

Lemma~\ref{lem:shift} will play a crucial role when proving that among all symmetric bodies,
  we can restrict our attention to those symmetric bodies composed of a hyperpyramid and a $d$-box.
 
The following lemma finds the one body in this class that moves the centroid the furthest to the right,
  i.e., away from the apex.
\begin{lem}\label{lem:max_centr}
Let $L \ge 0$.
For $h \in [0,L]$, and $\vartheta > 0$ let  
$\ell_{h,\vartheta}: \IR \to \IR_0^+$ be the function with
\begin{align}
  \ell(\xi) =
  \begin{cases}
    0 & \text{ if } \xi \le 0 \text{ or } \xi > L\\
    \frac{\xi\vartheta}{h} & \text{ if } \xi \in (0,h]\\
    \vartheta & \text{ if } \xi \in (h,L]\,.
  \end{cases}\notag
\end{align}
Among all symmetric bodies $S(\ell_{h,\vartheta})$ with $h \in [0,L]$, and $\vartheta > 0$,
  the symmetric bodies $S(\ell_{L,\vartheta})$, with arbitrary $\vartheta > 0$, maximize the first centroid component.
It is $\centroid_1(S(\ell_{L,\vartheta})) = \frac{Ld}{d+1} = \frac{M_1(S(\ell_{L,\vartheta}))d}{d+1}$.
\end{lem}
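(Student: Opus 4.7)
The plan is to reduce the claim to a one-variable inequality by evaluating $\centroid_1(S(\ell_{h,\vartheta}))$ explicitly as a function of $h$ and $L$. A tempting shortcut via Lemma~\ref{lem:shift} does not apply here, because passing from $\ell_{h,\vartheta}$ to $\ell_{h',\vartheta}$ with $h<h'$ modifies $\ell$ on the whole interval $(0,h']$, which generally straddles the centroid and thus violates that lemma's hypothesis. Instead I would compute directly: by definition~\eqref{eq:Symm}, each cross-section of $S(\ell_{h,\vartheta})$ orthogonal to the first axis at $\xi$ is a $(d-1)$-cube of side $\ell_{h,\vartheta}(\xi)$, so $\vol_d(S(\ell_{h,\vartheta}))$ and its first moment about the first axis are obtained by integrating $\ell_{h,\vartheta}(\xi)^{d-1}$ and $\xi\,\ell_{h,\vartheta}(\xi)^{d-1}$ respectively. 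The body splits naturally at $\xi=h$ into a $d$-pyramid with apex at the origin and a $d$-box atop it.

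Splitting the integrals accordingly and computing them term by term yields
\begin{align*}
\vol_d(S(\ell_{h,\vartheta})) = \vartheta^{d-1}\Bigl(\frac{h}{d}+L-h\Bigr) \quad\text{and}\quad \int_{0}^{L}\xi\,\ell_{h,\vartheta}(\xi)^{d-1}\,d\xi = \vartheta^{d-1}\Bigl(\frac{h^{2}}{d+1}+\frac{L^{2}-h^{2}}{2}\Bigr).
\end{align*}
The prefactor $\vartheta^{d-1}$ cancels in the ratio, so $c_1(h):=\centroid_1(S(\ell_{h,\vartheta}))$ depends only on $h$ and $L$, not on $\vartheta$. Substituting $h=L$ makes the box vanish and gives $c_1(L)=Ld/(d+1)$; together with $M_1(S(\ell_{L,\vartheta}))=L$, this already covers the second equality of the statement.

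The remaining, and principal, technical step, though still elementary, is to show that $c_1(h)\le Ld/(d+1)$ for every $h\in[0,L]$. Clearing denominators, this inequality is equivalent to $(d+1)L^{2}-(d-1)h^{2}\le 2L\bigl(dL-(d-1)h\bigr)$; regrouping all terms on one side reduces it to $(d-1)(L-h)^{2}\ge 0$, which holds for every $d\ge 1$ with equality iff $h=L$. Hence $h=L$ is the unique maximizer over $[0,L]$, which yields the first equality in the statement and completes the argument.
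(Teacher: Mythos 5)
Your proof is correct. You compute the same volume and first-moment integrals as the paper (the paper organizes them as a pyramid-plus-box decomposition, but the resulting formula for $\centroid_1$ is identical to your ratio), and your substitution $h=L$ giving $Ld/(d+1)$ matches. Where you differ is the final step: the paper establishes $h\mapsto\centroid_1(S(\ell_{h,\vartheta}))$ is nondecreasing by differentiating and showing $\tfrac{d}{2(d+1)}\bigl(1-\tfrac{L^2}{(Ld-dh+h)^2}\bigr)\ge 0$, with a separate case $d=1$ where the centroid is constant $L/2$. You instead prove the inequality $c_1(h)\le Ld/(d+1)$ directly by clearing denominators and reducing it to $(d-1)(L-h)^2\ge 0$. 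Your route is a bit cleaner: it avoids calculus entirely, absorbs the $d=1$ case without a split (the factor $d-1$ vanishes), and makes the structure of the slack visibly a perfect square. One small imprecision: you write ``equality iff $h=L$,'' but for $d=1$ equality holds for all $h$; since the lemma only claims that $h=L$ maximizes (not uniquely), this does not affect the argument. You are also right that Lemma~\ref{lem:shift} is inapplicable here because changing $h$ alters $\ell$ on both sides of the centroid; the paper likewise does not invoke it in this lemma.
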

\begin{proof}
We first observe that $S(\ell_{h,\vartheta})$ can be decomposed into a hyperpyramid, without a base, of height~$h$ and
  a $d$-box as follows:
Define $P(h)$ by 
\begin{align*}
  P(h) = \bigcup_{\xi \in [0,h)}\Cube_\xi(\ell_{h,\vartheta}(\xi))\,.
\end{align*}
We observe that $P(h)$ is the hyperpyramid, without base, in $\IR^d$
  whose height is~$h$,
  whose apex is at $(0,\dots,0)$ and whose base is the
  $(d-1)$-cube with side length~$\vartheta$, centered at the first axis and lying within the
  hyperplane~$H_h$.
Define $B(h')$, with $h' = L-h$, by
\begin{align*}
  B(h') = \bigcup_{\xi \in [h,L]}\Cube_\xi(\ell_{h,\vartheta}(\xi)) = \bigcup_{\xi \in (h,L]}\Cube_\xi(\vartheta)\,.
\end{align*}
We observe that $B(h')$ is a $d$-box in $\IR^d$, with
\begin{align*}
  \vol_d(B(h')) = \vartheta^{d-1}(L-h)\,.
\end{align*}
It is $P(h) \cap B(h') = \emptyset$ and $S(\ell_{h,\vartheta}) = P(h) \cup B(h')$.

\medskip

We will next compute the first component of the centroids of $P(h)$ and $B(h')$, allowing us to
   compute the first component of the centroid of $S(\ell_{h,\vartheta})$.
For any $\alpha \in (0,h]$, the cut $X_\alpha = P(h) \cap H_\alpha = \Cube_\alpha(\ell_{h,\vartheta}(\alpha))$ 
   has volume $\vol_{d-1}(X_\alpha) = \left(\frac{\alpha\vartheta}{h}\right)^{d-1}$.
From the volume of a pyramid in $\IR^d$, we obtain $\vol_d(P(h)) = \frac{h\vartheta^{d-1}}{d}$.
The first centroid component of $P(h)$ thus is at
\begin{align*}
x'_1 = \frac{1}{\vol_d(P(h))}\int_{0}^{h}\alpha \vol_{d-1}(X_\alpha) \,d\alpha = \frac{hd}{d+1}\enspace.\notag
\end{align*}
By symmetry arguments the first centroid component of $B(h')$ is at
\begin{align*}
x''_1 = h+\frac{h'}{2}\enspace.\notag
\end{align*}
The first centroid component of the combined geometric body $P(h) \cup B(h')$ thus is at
\begin{align}
  x_1 = \frac{x'_1\vol_d(P(h)) + x''_1\vol_d(B(h'))}{\vol_d(P(h) \cup B(h'))}
    = \frac{d(L^2d+L^2-dh^2+h^2)}{2(d+1)(Ld-dh+h)}\,.\label{eq:L}
\end{align}

\medskip

We next distinguish between two cases for dimension~$d$:
\begin{enumerate}
\item  For $d = 1$, we obtain from \eqref{eq:L} that $x_1(h) = L/2$; and the lemma follows.

\item Otherwise, $d \ge 2$.
Algebraic manipulation yields,
\begin{align*}
\frac{dx_1(h)}{dh} = \frac{d}{2(d+1)}\left(1-\frac{L^2}{(Ld-dh+h)^2}\right) > 0\,,\notag
\end{align*}
for $0 \le h < L$ and $d \ge 2$.
Thus, $\max_{h \in [0,L]}x_1(h) = x_1(L) = \frac{Ld}{d+1}$;
  and the lemma follows also in this case.
\end{enumerate}
\end{proof}

We are now in position to show our major result on the Centroid algorithm in Theorem~\ref{thm:centroid}:
        we prove that for any convex bounded polytope $A$ in $\IR^d$ and for every $j \in [d]$, we
	have 
\begin{align}\label{eq:centroid}
  \left(1-\frac{d}{d+1}\right)M_j(A) + \frac{d}{d+1}m_j(A) \le \centroid_j(A)
\le \left(1-\frac{d}{d+1}\right)m_j(A) + \frac{d}{d+1}M_j(A)\,.
\end{align}

Choose an arbitrary component $j \in [d]$.
Without loss of generality assume that $j=1$, $m_j(A)=0$ and $M_j(A) > 0$.
It suffices to prove the right inequality in \eqref{eq:centroid} to show \eqref{eq:centroid}
   by the following argument: assume by means of contradiction that
   that the right inequality is valid for all $A$, but there is an $A$ for which the left is invalid.
Then negating all first components of points in $A$ yields a polytope that violates
   the right inequality; a contradiction to the initial assumption.
It thus suffices to show
   \begin{align}
   \centroid_1(A) \le \frac{d}{d+1}M_1(A)\,.\label{eq:toshow}
   \end{align}
   
\medskip

We now construct symmetrized body $A_s$ from $A$
  that has the same volume and the same first centroid component
  as $A$.
For that purpose we do a Steiner-type symmetrization of~$A$.

By a simple reduction to a smaller dimension, we may assume $\vol_d(A) > 0$.
Let $v_\xi = \vol_{d-1}(H_\xi \cap A)$, and let $\ell_A$ be the function $\IR \to \IR_0^+$
with
\begin{align}
  \ell_A(\xi) = \begin{cases}
    v_\xi^{1/(d-1)} & \text{ if } v_\xi > 0\,,\\
    0             & \text{ if } v_\xi = 0\,.
  \end{cases}\notag
\end{align}
Then let $A_s = S(\ell_A)$.
From \eqref{eq:Symm}, we have
  $\vol_{d-1}(H_\xi \cap A_s) = \vol_{d-1}(\Cube_\xi(\ell_A(\xi))) = \ell_A(\xi)^{d-1} = v_\xi$. 
Thus
  $\vol_{d-1}(H_\xi \cap A_s) = \vol_{d-1}(H_\xi \cap A)$ 
and further,
  $\vol_{d}(A) = \int_{-\infty}^{+\infty}\vol_{d-1}(A \cap H_\xi)\ d\xi
              = \int_{-\infty}^{+\infty}\vol_{d-1}(A_s \cap H_\xi)\ d\xi =  \vol_{d}(A_s)$.
Combining both yields,
\begin{align*}
  \centroid_1(A) = \frac{1}{\vol_d(A)}\int_{-\infty}^{+\infty}\xi\vol_{d-1}(A \cap H_\xi)\ d\xi = \centroid_1(A_s)\,,
\end{align*}
i.e., the first component of the centroid is invariant under symmetrization.
For ease of notation, abbreviate the first component of the centroid by $c = \centroid_1(A) \ge 0$.

Figure~\ref{fig:symm2} depicts the process of symmetrization around the first axis at an example
  in dimension two: the body $A'$ on the right is constructed from $A$ by symmetric 1-cubes (line segments)
  centered at $\xi \in [m_1(A),M_1(A)]$ such that their volume (length)
  $v_\xi = \vol_1(H_\xi \cap A) = \vol_1(H_\xi \cap A')$.
  
\begin{figure}[ht]
\centering
\nop{ 
\begin{tikzpicture}[>=latex]
  \coordinate (p1) at (0,0);
  \coordinate (p2) at (0.5,1);
  \coordinate (p3) at (2,1.2);
  \coordinate (p4) at (3,-0.3);
  \coordinate (p5) at (1.5,-1);
  
  \draw[-, name path=poly] (p1) -- (p2) --(p3) -- (p4) -- (p5) -- (p1);

  \draw[->] (-0.5,-0.5) -- ++(4.1,0) node[right] {$x_1$};
  \draw[->] (-0.5,-0.5) -- ++(0,2) node[above] {$x_2$};

  \node (A) at (1.0,1.5) {\small $A = \hull(C_p)$};

  \draw[thick] (0,-0.7) -- ++(0,0.3) node[pos=0,below] {\small $m_1(A)$};
  \draw[thick] (3,-0.7) -- ++(0,0.3) node[pos=0,below] {\small $M_1(A)$};

  \draw[dotted,thick] (2,-1) -- ++(0,2.5) node[pos=0,below] {$H_{\xi}$};
  \draw[-,very thick,red] (2,-0.77) -- ++(0,1.97) node[pos=0.6,right,color=red] {$v_\xi$};

  \draw (1.3,0.1) circle (2pt) node[below] {$C$};

  \path[name path=vertpath1] (0.5,-2) -- ++(0,4);
  \path[name intersections={of=poly and vertpath1}]
    let \p1 = ($(intersection-1)-(intersection-2)$), \n1 = {veclen(\x1,\y1)}, \n2= {divide(\n1,2)} in
    (0.5,-\n2)++(6,-0.5) coordinate (a1) -- ++ (0,\n1) coordinate (b1);

  \path[name path=vertpath2] (1.5,-2) -- ++(0,4);
  \path[name intersections={of=poly and vertpath2}]
    let \p1 = ($(intersection-1)-(intersection-2)$), \n1 = {veclen(\x1,\y1)}, \n2= {divide(\n1,2)} in
    (1.5,-\n2)++(6,-0.5) coordinate (a2) -- ++ (0,\n1) coordinate (b2);

  \path[name path=vertpath3] (2,-2) -- ++(0,4);
  \path[name intersections={of=poly and vertpath3}]
    let \p1 = ($(intersection-1)-(intersection-2)$), \n1 = {veclen(\x1,\y1)}, \n2= {divide(\n1,2)} in
    (2,-\n2)++(6,-0.5) coordinate (a3) -- ++ (0,\n1) coordinate (b3);

  \path[draw,->,snake=coil,line after snake=5pt, segment aspect=0,%
        segment length=7pt] (4,0.5) -- ++(1,0);
    
  \draw[-, name path=newpoly] (6,-0.5) -- (a1) --(a2) -- (a3) -- (9,-0.5) -- (b3) -- (b2) -- (b1) -- (6,-0.5);

  \draw[->] (-0.5,-0.5)++(6,0) -- ++(4.1,0) node[right] {$x_1$};
  \draw[->] (-0.5,-0.5)++(6,0) -- ++(0,2) node[above] {$x_2$};  

  \node (Anew) at (7.0,1.0) {\small $A' = \hull(C'_p)$};

  \draw[thick] (6,-0.7) -- ++(0,0.3) node[pos=0,below] {\small $m_1(A')$};
  \draw[thick] (9,-0.7) -- ++(0,0.3) node[pos=0,below] {\small $M_1(A')$};

  \draw[dotted,thick] (8,-1.7) -- ++(0,2.4) node[pos=0.0,below] {$H_{\xi}$};
  \draw[-,very thick,red] (a3) -- (b3) node[pos=0.7,right,color=red] {$v_\xi$};
  
  \draw (7.3,-0.5) circle (2pt) node[below] {$C'$};  
\end{tikzpicture}
} 
\includegraphics{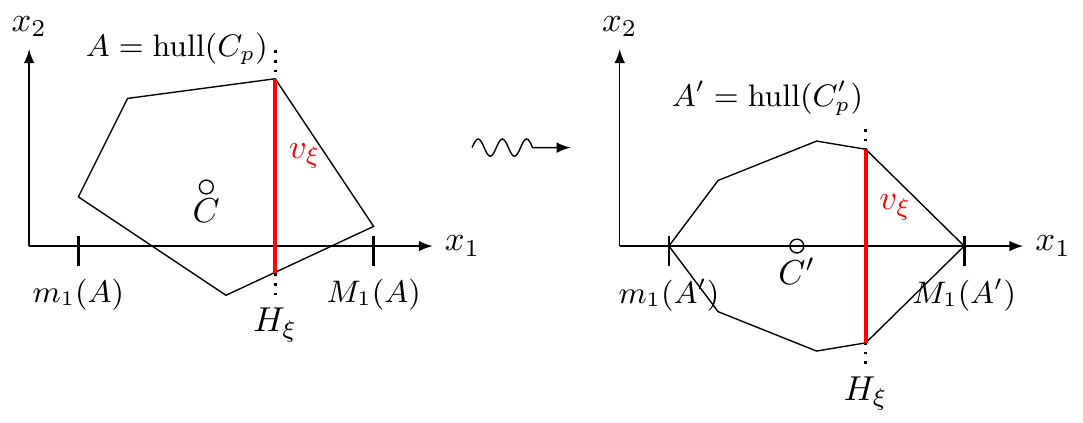}
\caption{Symmetrization of the polytope $A = \hull(C_p)$ around the first axis. The original polytope $A$ is transformed into the
  polytope $A' = \hull(C'_p)$ such that $m_1$, $M_1$ and the first component
  of the centroid~$C$ are invariant under the transformation.}\label{fig:symm2}
\end{figure}  

First observe, that $H_\xi \cap A = \emptyset$ for all $\xi < m_1(A) = 0$ and all $\xi > M_1(A)$.
From the fact that $A$ is a bounded polytope, $M_1(A) < \infty$ and $\ell_A$ is bounded.
Thus $\ell_A$ is zero outside of~$[0,M_1(A)]$.

We next show that function~$\ell_A$ is concave in $[0,M_1(A)]$.
In fact, concavity of $\ell_A$ in $[m_1(A),M_1(A)]$ is equivalent to convexity of~$A_s$.

Arbitrarily choose~$\alpha,\beta \in [0,M_1(A)]$, and $t\in [0,1]$.
From convexity of~$A$ we have,
  $(H_{t\alpha+(1-t)\beta} \cap A) \supseteq t(H_{\alpha} \cap A) +  (1-t)(H_{\beta} \cap A)$ 
where  ``$+$'' on the right side denotes the Minkowski sum of sets.
Hence
\begin{align}
  \ell_A(t\alpha+(1-t)\beta)^{d-1} = \vol_{d-1}(H_{t\alpha+(1-t)\beta} \cap A)
    \geq \vol_{d-1}\big(t(H_{\alpha} \cap A) + (1-t)(H_{\beta} \cap A)\big)\,.\label{eq:conv}
\end{align}
Applying the Brunn-Minkowski inequality we obtain,
\begin{align}
\vol_{d-1}\big(t(H_{\alpha} \cap A) + (1-t)(H_{\beta} \cap A)\big)^{1/(d-1)} \ge t\vol_{d-1}\big(H_{\alpha} \cap A\big)^{1/(d-1)} + (1-t)\vol_{d-1}\big(H_{\beta} \cap A\big)^{1/(d-1)}\,.\notag
\end{align}
Together with~\eqref{eq:conv} this yields,
  $\ell_A(t\alpha+(1-t)\beta) \ge t\ell_A(\alpha) + (1-t)\ell_A(\beta)$, 
i.e., the concavity of function~$\ell_A$ in $[0,M_1(A)]$.

\medskip

By the following reduction argument, we can further assume that~$\ell_A$ has constant positive slope
  within~$[0,c]$, i.e., for all $\xi \in [0,c]$,
  $\ell_A(\xi) = \xi c/\ell_A(c)$.   
Assume by means of contradiction that this is not the case and
  consider the continuous function $\ell'_A: \IR \to \IR_0^+$ with
\begin{align}
  \ell'_A(\xi) = \begin{cases}
        \xi c/\ell_A(c) & \text{ if } \xi \in [0,c]\,,\\
        \ell_A(\xi) & \text{ else }\,.
        \end{cases}
\end{align}
Note that $\ell_A$ and $\ell'_A$ differ only within $[0,c]$, and that $\ell'_A$ has constant positive
  slope $c/\ell_A(c)$ within $[0,c]$.
By concavity of $\ell_A$, we have $\ell'_A(\xi) \le \ell_A(\xi)$ for all~$\xi \in [0,c]$.
From Lemma~\ref{lem:shift}, $\centroid_1(S(\ell'_A)) \ge \centroid_1(S(\ell_A))$.
We may thus consider $\ell'_A$ instead of $\ell_A$; the reduction follows.

\medskip

Let the maximum value achieved by $\ell_A$ be
  $\vartheta = \max_{\xi \in [0,M_1(A)]}\{\ell_A(\xi)\}$ 
and let $h \in [0,M_1(A)]$ be the smallest value where $\ell_A(h) =  \vartheta$, i.e., the maximum is reached.

By a reduction argument, we now show that~$\ell_A$ can be assumed to have constant positive slope
  within $[0,\max(c,h)]$ and has value~$\vartheta$ within $[\max(c,h),M_1(A)]$.
Assume by means of contradiction that this is not the case and
  consider the continuous function $\ell'_A: \IR \to \IR_0^+$ with
\begin{align}
  \ell'_A(\xi) = \begin{cases}
        \xi c/\ell_A(c) & \text{ if } \xi \in [0,\max(c,h)]\,,\\
        \vartheta & \text{ else }\,.
        \end{cases}
\end{align}
We distinguish between two cases for~$h$:

\begin{enumerate}
\item In case $h < c$, function $\ell_A$ and $\ell'_A$ may differ only within $[c,M_1(A)]$.
By definition of $\vartheta$, it holds that $\ell_A(\xi) \leq \vartheta = \ell'_A(\xi)$ for all $\xi \in [c,M_1(A)]$.
We may thus apply Lemma~\ref{lem:shift}, and obtain that
  $\centroid_1(S(\ell'_A)) \ge \centroid_1(S(\ell_A))$; the reduction follows in this case.

\item Otherwise, $h\ge c$ and function $\ell_A$ and $\ell'_A$ may differ only within $[h,M_1(A)]$.
By concavity of $\ell_A$, we have that $\ell'_A(\xi) \le \ell_A(\xi)$ for all~$\xi \in [c,h]$.
Further, by definition of $\vartheta$, we have $\ell_A(\xi) \leq \vartheta = \ell'_A(\xi)$
  for all $\xi \in [h,M_1(A)]$.
Again, we apply Lemma~\ref{lem:shift} and obtain that
  $\centroid_1(S(\ell'_A)) \ge \centroid_1(S(\ell_A))$; the reduction also follows in this case.

\end{enumerate}
  
We thus obtain that $\ell_A$ is of the form as required by Lemma~\ref{lem:max_centr},
  with $L = M_1(A)$, $h$ and $\vartheta$.
This yields \eqref{eq:toshow}, which concludes the proof.

\section{ExtremePoint and Centroid with Disconnectivity}\label{sec:moreau}

The aim of this section is to study the behavior of the ExtremePoint and the Centroid 
	algorithms under very weak connectivity assumptions.
Namely, we prove that the striking convergence properties of the convex combination 
	algorithms with {\em non-vanishing\/} and {\em bounded\/} 
	weights (e.g., EqualNeighbor) extend to both the ExtremePoint and the Centroid 
	algorithms and, more generally, to  every convex combination algorithm that is 
	$\alpha$-safe. 
The result is based on the fundamental convergence theorem on infinite product of stochastic matrices proved 
	by Moreau in~\cite{Mor05} that we recall now.
	
Let $\big(A(t)\big)_{t\in \IN^*}$ be a sequence of stochastic matrices of size $n$ and let
	$G(t)$ denote  the directed graph associated to $A(t)$.
The edges that appear infinitely often in the directed graphs $ G(t)$ define a directed graph
 	denoted~$G^{\infty}$.
The	following assumptions are made about the matrices~$A(t)$:
	\begin{description}
	\item[A1] Each matrix $A(t)$ has a positive diagonal, i.e.,
	$A_{p p}(t) >0$ for all $p \in [n]$.
	\vspace{-0.2cm}
	\item[A2] There exists some $a \in ]0,1]$ such that $A_{p q}(t) \in \{0\}\cup [a,1]$
		for all $p,q \in [n]$ and all $t\in \IN^*$.
	\vspace{-0.2cm}
	\item[A3] For each $t\in \IN^*$, the directed graph $G (t) $ is bidirectional.
	\vspace{-0.2cm}
	\item[A4] The directed graph $ G^{\infty}$ is strongly 
	connected.
	\end{description}
	
\begin{thm}[{\cite{Mor05}}]\label{thm:moreau}
Under assumptions A1--A4, the left-infinite product of stochastic matrices $\prod_{t=1}^{\infty} A(t) $ converges
	to a stochastic matrix with identical rows.
\end{thm}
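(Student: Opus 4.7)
The plan is to prove that for every initial vector $x(0) \in \mathbb{R}^n$ the iterates $x(t) = A(t) x(t-1)$ converge to a multiple $c \,\mathbf{1}$ of the all-ones vector. Since the $k^{\text{th}}$ column of the prefix product $\prod_{s=t}^{1} A(s)$ is precisely the image of the standard basis vector $e_k$ under this iteration, achieving such a consensus for every initial condition is equivalent to the matrix product converging to a rank-one matrix with identical rows; row-stochasticity passes to the limit because each finite prefix product is row-stochastic and the set of row-stochastic matrices is closed.

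First I would introduce the extremal coordinate functions $m(t) = \min_p x_p(t)$ and $M(t) = \max_p x_p(t)$. Stochasticity together with A1 implies that each $x_p(t+1)$ is a convex combination of the $x_q(t)$ that puts strictly positive mass on $x_p(t)$ itself, which forces $m(\cdot)$ non-decreasing and $M(\cdot)$ non-increasing. Both sequences are bounded, hence converge to limits $m^* \leq M^*$, and the entire theorem reduces to showing $m^* = M^*$.

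Assume for contradiction that $2\eta := M^* - m^* > 0$. The plan is to exhibit, arbitrarily late, a window of at most $n - 1$ consecutive rounds during which every agent's value is driven strictly below $M^* - \tfrac{1}{2} a^n \eta$, contradicting $M(t)\to M^*$. I would fix $\tau$ large enough that $|m(t) - m^*|$ and $|M(t) - M^*|$ are negligible compared to $a^n \eta$ for all $t \ge \tau$, and split the agents at each $t \ge \tau$ into $H(t) = \{p : x_p(t) \ge (m^* + M^*)/2\}$ and $L(t) = [n] \setminus H(t)$; both are non-empty by the assumption. Strong connectivity of $G^{\infty}$ combined with A3 guarantees that at arbitrarily large times $t$ some edge of $G(t+1)$ crosses the cut $(L(t), H(t))$; by A2, the corresponding weight is at least $a$, so one obtains
\[
x_q(t+1) \le a\,x_p(t) + (1-a)M(t) \le M^* - a\eta + o(1)
\]
for a freshly contaminated agent $q \in H(t)$. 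The critical step is then to propagate this single contamination to the rest of the high set: using the reverse edges supplied by A3 and the strong connectivity of $G^{\infty}$, one argues inductively that within at most $n - 1$ further rounds every agent inherits a decrement of at least $a^n \eta / 2$ along a path in $G^{\infty}$, while A1 ensures that already-contaminated agents cannot drift back upwards between consecutive steps.

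The hardest part is exactly this propagation lemma, which I expect to be the technical heart of the argument. The delicacy lies in the fact that cross-edges between the shrinking high set and its complement appear only at sparse (albeit infinitely many) times, so one must carefully schedule a chain of at most $n-1$ contaminating events out of the infinitely-often-occurring edges of $G^{\infty}$; bidirectionality (A3) is what converts each such edge into a genuine decrement for the high endpoint rather than merely an upper bound on the low endpoint, and it is also what rules out the counterexamples that exist in the merely rooted (non-bidirectional) setting. Once $m^* = M^*$ is derived, all iterates converge to a common scalar $c \in [m(0), M(0)]$, and combining with the pointwise limits for standard basis initial conditions yields the required rank-one stochastic limit matrix.
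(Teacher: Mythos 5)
The paper itself does not prove Theorem~\ref{thm:moreau}; it is quoted verbatim from Moreau~\cite{Mor05}, so there is no in-paper proof to compare against. Evaluating your argument on its own merits, the set-up is fine (columns of the left-product track $e_k$-initialized trajectories, $m(\cdot)$ non-decreasing, $M(\cdot)$ non-increasing, reduce to $m^*=M^*$), but the step you yourself identify as the technical heart --- the contamination-propagation lemma --- is exactly where the argument breaks, and it cannot be repaired in the form you sketch.

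The difficulty is that assumption A4 gives no uniform bound on how long you must wait for an edge of $G^\infty$ to reappear. Your scheme is to contaminate one high agent $q$ with a decrement $a\eta$ and then chain this through at most $n-1$ further events. But between two consecutive events the contamination of $q$ decays geometrically: if $q$ only sees values near $M^*$ for $s$ rounds after being hit, then $x_q$ is pushed back to at least $M^* - a^{s}\cdot a\eta - o(1)$, which tends to $M^*$. Since the gap $s$ to the next relevant interaction can be arbitrarily large, the decrement you are trying to carry along the chain can be wiped out before it propagates; the uniform $a^n\eta/2$ floor you need does not survive. This is not a technicality: with A1, A2, A4 but without A3 the theorem is \emph{false} precisely because of this decay-vs-propagation race, and the known counterexamples for the directed setting beat any argument of the ``spread a one-shot decrement in a bounded number of events'' type. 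Your stated role for A3 --- that a bidirectional edge gives the high endpoint a genuine decrement --- does not distinguish the bidirectional case from the directed one, since a directed edge from a low agent into a high agent already does that. There is also a secondary issue in the same step: the cut $\big(L(t),H(t)\big)$ is time-varying, and A4 only guarantees that for a \emph{fixed} cut some crossing edge recurs infinitely often; nothing forces those recurrence times to coincide with the times at which the cut actually equals that fixed partition.

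What bidirectionality really buys, and what the actual proofs use, is a conservation/bounded-variation mechanism rather than contamination spreading. Moreau's original argument is a set-valued Lyapunov one; the now-standard route (Lorenz; Hendrickx--Tsitsiklis ``cut-balanced'' consensus) sorts the values $y_1(t)\le\cdots\le y_n(t)$ and uses the flux-balance coming from A2--A3 to show each partial sum $\sum_{i\le k}y_i(t)$ has bounded total variation, hence each $y_k(t)$ converges; one then shows that two agents joined by an edge of $G^\infty$ must share the same limit. That mechanism is immune to long silent gaps because it tracks cumulative movement rather than the fate of any single decrement. Your proposal would need to be rebuilt around such a bounded-variation argument to close the gap.
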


This theorem is remarkable because it shows that in the case of bidirectional interactions, {\em without 
	any connectivity assumptions}, every convex combination algorithm with non-vanishing and bounded 
	weights converges and achieves asymptotic consensus among agents that are not disconnected from some time~on.
	
Indeed, let $\big(G_t \big)_{t\geq 1}$  be a communication pattern composed of bidirectional directed graphs
	such that  the directed graph of the edges that appear infinitely often is strongly connected.
In other words, the agents are infinitely often connected.
We consider a convex combination algorithm with non-vanishing weights that are lower bounded by some $a >0$,
	i.e., 
	\begin{equation}\label{eq:weights}
	\forall (p,q) \in E_t, \ w_{p q}(t) \geq a \, .
	\end{equation}

Let $W(t)$ denote the $n \times n$ stochastic matrix with entries $w_{p q}(t)$.
The important point of~(\ref{eq:weights}) lies in the fact that the associated graph of the matrix~$W(t)$ 
	then coincides with the communication graph at round~$t$.
Hence assumptions A1--A4 are fulfilled by all the matrices~$W(t)$.
Theorem~\ref{thm:moreau} shows that with the  communication pattern~$\big(G_t \big)_{t\geq 1}$
	and any initial configuration $x(0) \in\big( \IR^d \big)^n$,  the convex 
	combination algorithm achieves asymptotic consensus.
	
The key point now is that in the case of dimension one, every  $\alpha$-safe averaging algorithm
	satisfies~(\ref{eq:weights}) with~$a = \alpha/n$.
	
\begin{prop}\label{prop:simplex}

Let $(v_1, \ldots, v_{n})$ any $n$-tuple of real numbers such that 
	$v_1\leq \ldots \leq v_{n}$, and let $\alpha$ be a real number in $[0, 1/2]$.
For every $x$ in the interval $[(1- \alpha)\  v_1 + \alpha \, v_n,  \alpha \, v_1 + (1- \alpha) \,  v_n]$,
	there exist $n$  real numbers $a_1, \ldots,a_{n}$ in the interval $[ \alpha / n, 1]$ such that 
	$x = a_1 \, v_1 + \ldots + a_{n} \, v_{n}$ and $a_1 + \ldots + a_{n} = 1$.
\end{prop}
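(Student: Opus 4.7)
The plan is to represent each $a_i$ in the form $a_i = \alpha/n + c_i$ with $c_i \geq 0$. Under this substitution, the constraint $\sum_i a_i = 1$ becomes $\sum_i c_i = 1-\alpha$, and the target equation $\sum_i a_i v_i = x$ becomes $\sum_i c_i v_i = x - \alpha \bar v$, where $\bar v = (v_1 + \cdots + v_n)/n$. The problem thus reduces to writing $x - \alpha \bar v$ as a nonnegative combination of the $v_i$ of total weight $1-\alpha$, after which the bound $a_i \leq 1$ is automatic since $a_i \leq \alpha/n + (1-\alpha) \leq 1$.

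My next step is to concentrate the ``free'' weight $1-\alpha$ on the two extremes $v_1$ and $v_n$: I set $c_i = 0$ for $1 < i < n$ and look for $c_1, c_n \geq 0$ with $c_1 + c_n = 1-\alpha$. The remaining equation is linear in the single scalar $c_1$ and yields the closed form
\[
c_1 \;=\; \frac{(1-\alpha)\,v_n + \alpha\,\bar v - x}{v_n - v_1}
\]
in the nondegenerate case $v_1 < v_n$. In the degenerate case $v_1 = v_n$, all $v_i$ coincide, the hypothesis interval collapses to the single point $x = v_1$, and the uniform choice $a_i = 1/n \geq \alpha/n$ works.

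The final step is to verify that the $c_1$ produced above lies in $[0, 1-\alpha]$, which is precisely where the hypothesis on $x$ enters and where the sandwich $v_1 \leq \bar v \leq v_n$ is used. The upper bound $x \leq \alpha v_1 + (1-\alpha) v_n$ together with $\bar v \geq v_1$ forces $c_1 \geq 0$, and the lower bound $x \geq (1-\alpha) v_1 + \alpha v_n$ together with $\bar v \leq v_n$ forces $c_1 \leq 1-\alpha$, so that $c_n = (1-\alpha) - c_1 \geq 0$ as well. There is no real obstacle here: the $\alpha$-safeness margin provided by the hypothesis on $x$ is exactly matched, via $v_1 \leq \bar v \leq v_n$, by the maneuvering room left after reserving mass $\alpha/n$ for each $v_i$. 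The whole proof is a direct, constructive computation.
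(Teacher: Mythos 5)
Your proof is correct and rests on the same key idea as the paper's: reparametrize $a_i = \alpha/n + c_i$ (equivalently, the affine decomposition $S_n^{\alpha} = \alpha\,(1/n,\dots,1/n) + (1-\alpha)\,S_n^0$) and then use the sandwich $v_1 \le \bar v \le v_n$ to absorb the shift by $\alpha\bar v$. The only cosmetic difference is that you explicitly construct a solution with weight concentrated on $v_1, v_n$, whereas the paper computes the full image interval $S_n^{\alpha}\cdot v = [\alpha\bar v + (1-\alpha)v_1,\ \alpha\bar v + (1-\alpha)v_n]$ and checks that it contains the target interval; the verification inequalities are the same.
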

\begin{proof}
Let us consider the simplex 
	$$S_n^{\alpha} = \left \{ (a_1, \dots, a_n) \in [\alpha/n, 1] \mid a_1 + \ldots a_n =1 \right \} \, , $$
	and let us denote
	$$ S_n^{\alpha} \cdot v = \left \{ a_1 v_1 + \dots + a_n v_n \mid  (a_1, \ldots, a_n) \in S_n^{\alpha}  \right \} \, .$$
We easily check that
	$$ S_n^{\alpha} = \alpha\cdot  (1/n, \dots, 1/n) + (1-\alpha)\cdot S_n^0 \, .$$
Hence $ S_n^{\alpha} \cdot v $ is the compact interval 
	$$ S_n^{\alpha} \cdot v = \left [ \alpha\,  \overline{v}  +(1-\alpha)\,  v_1\, , \,  \alpha \, \overline{v} + (1-\alpha) \, v_n \right ] $$
	where  $ \overline{v}  = (v_1+ \cdots + v_n)/n$.
Since $v_1 \leq \overline{v} \leq v_n$, we have
	$$ \alpha \, \overline{v} +(1-\alpha) \, v_1 \leq (1- \alpha) \, v_1 + \alpha \, v_n
	     \leq   \alpha \, v_1 + (1- \alpha) \, v_n \leq  \alpha\,  \overline{v} + (1-\alpha) \, v_n \, , $$
	 and so  $ S_n^{\alpha} .v $ contains the interval 
	 $ \left[ (1- \alpha)\, v_1 + \alpha\, v_n,  \alpha \,  v_1 + (1- \alpha) \,  v_n \right ] $,	which completes the proof.
\end{proof}
	
 Then we derive  the following theorem applying to the ExtremePoint and the Centroid 
	algorithms.

\begin{thm}\label{thm:bidir}
Asymptotic consensus is achieved in any execution of a convex combination algorithm that is 
	$\alpha$-safe if communication graphs are all bidirectional and if the agents are infinitely often
	connected. 
\end{thm}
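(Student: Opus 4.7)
The plan is to reduce the statement to $d$ independent applications of Moreau's theorem (Theorem~\ref{thm:moreau}), one per coordinate, using Proposition~\ref{prop:simplex} as the bridge that turns the geometric $\alpha$-safe inclusion into a stochastic matrix with a uniform positive lower bound on its nonzero entries.

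First I fix a coordinate $i\in[d]$ and look only at the scalar sequence $x_{p,i}(t)$. Equation~(\ref{eq:alpha:higher}), i.e.~the definition of $\alpha$-safeness in dimension $d$, says that at every round $t$ the value $x_{p,i}(t+1)$ lies in the $\alpha$-safe interval formed from the $i$-th components of the in-neighbors of $p$ at round $t+1$. Proposition~\ref{prop:simplex}, applied to the $k=|\In_p(t+1)|$-tuple of these neighbor values (with $k\le n$), then produces nonnegative coefficients $a^{(i)}_{pq}(t)$ summing to $1$, each at least $\alpha/k\ge\alpha/n$, and such that $x_{p,i}(t+1)=\sum_{q\in\In_p(t+1)} a^{(i)}_{pq}(t)\,x_{q,i}(t)$. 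Extending by zero outside of $\In_p(t+1)$, I obtain a stochastic matrix $A^{(i)}(t)$ of size $n$ whose associated directed graph coincides with the communication graph~$G_{t+1}$.

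Next I verify that the sequence $\bigl(A^{(i)}(t)\bigr)_{t\ge 1}$ satisfies Moreau's assumptions A1--A4. Since every communication graph contains all self-loops, $A^{(i)}_{pp}(t)\ge \alpha/n>0$, giving A1; A2 holds with $a=\alpha/n$ by construction; A3 is exactly the bidirectionality hypothesis; and A4 is precisely the infinite-connectivity hypothesis, since the associated graph of $A^{(i)}(t)$ equals $G_{t+1}$ and therefore $G^{\infty}$ coincides with the edges appearing infinitely often in the communication pattern. Theorem~\ref{thm:moreau} then yields that the left-infinite product $\prod_{t\ge 1}A^{(i)}(t)$ converges to a stochastic matrix with identical rows, which is asymptotic consensus for the $i$-th coordinate. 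Repeating the argument for each $i\in[d]$ gives asymptotic consensus in $\IR^d$.

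The real content of the proof sits in Proposition~\ref{prop:simplex}, which converts the geometric safety property into the linear-algebraic uniform weight bound $\alpha/n$; once this is available, everything else is bookkeeping. The only conceptual subtlety is that the reconstructed weights $a^{(i)}_{pq}(t)$ may depend on the coordinate~$i$, so one cannot hope to fold the $d$ coordinates into a single matrix sequence; this is harmless, however, because Moreau's theorem is applied separately to each of the $d$ scalar systems. Note also that the ``convex combination algorithm'' hypothesis in the statement is not strictly needed beyond $\alpha$-safeness, since Proposition~\ref{prop:simplex} automatically rebuilds a convex combination expression per coordinate with the desired lower bound on the positive entries.
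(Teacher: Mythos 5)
Your reduction is essentially the paper's proof: apply Proposition~\ref{prop:simplex} coordinate-by-coordinate to recover, for each $i\in[d]$, a stochastic matrix $A^{(i)}(t)$ with nonzero entries bounded below by $\alpha/n$ and associated graph equal to the communication graph, then invoke Moreau's theorem (Theorem~\ref{thm:moreau}) separately in each coordinate. The observation that the $d$ matrices at a given round need not agree, and that this is harmless, matches the paper's footnote.

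One caveat: your closing remark that the ``convex combination algorithm'' hypothesis is ``not strictly needed beyond $\alpha$-safeness'' is not correct, and in fact your write-up leaves the \emph{validity} clause unaddressed. Moreau's theorem gives convergence and agreement, and coordinate-wise safety confines the limit's $i$-th component to the range of initial $i$-th components, but that only places the limit in a bounding box, not in $\hull\{x_p(0)\}$. As the paper notes near~\eqref{eq:alpha:higher}, $\alpha$-safeness in dimension $d\ge 3$ does \emph{not} imply that each update stays in the convex hull of the in-neighbors' positions; that is exactly what the convex-combination hypothesis supplies, and it is what the paper uses to conclude validity. You should drop the remark and add the one-line validity argument (each update lies in the convex hull of the neighbors' positions, hence the limit lies in the convex hull of the initial positions).
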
 
\begin{proof}
We  apply Proposition~\ref{prop:simplex} for each component:
	every round of an  $\alpha$-safe algorithm thus corresponds to a $n d \times n d$ 
	block diagonal matrix.
The $k^{\text{th}}$ block for the $k^{\text{th}}$ component is an $n  \times  n$ stochastic matrix\footnote{%
	Since Proposition~\ref{prop:simplex} is applied component-by-component, 
	there is no guarantee that for a given round~$t$, the matrices~$A_1(t), \dots, A_d(t)$ are equal.},
	denoted~$A_k(t)$, and its associated graph is exactly $ G_t^{\, -1}$.

Each matrix~$A_k(t)$ satisfies assumptions A1--A4 with $a = \alpha/n$.
By Theorem~\ref{thm:moreau}, all the agents converge to the same  position $x^* \in \IR^d$,
	and thus the convergence and agreement conditions are satisfied.
	
For the validity condition, we just observe that at round~$t$, each agent moves within the convex hull of 
	its neighbors.
Hence the  limit position $x^*$ is in the convex hull of the initial positions.
\end{proof}
  
\section{Conclusion}\label{sec:conclusion}

In this article we introduced three algorithms for multidimensional asymptotic consensus.
All three of them work in dynamic networks with directed communication graphs that may change over time,
        with fast convergence rates.
The algorithms are generalizations of the optimal MidPoint algorithm in dimension one.
Two of them, the ExtremePoint and Centroid algorithms, work in an arbitrarily high
        dimension~$d$ and are $\frac{1}{2d}$- and
        $\frac{1}{d+1}$-safe, respectively.
Our amortization technique thus makes their convergence time linear in the number of agents,
        which is optimal.

Moreover, we showed that all three algorithms solve asymptotic consensus under very weak
        connectivity assumptions in bidirectional communication graphs.

\bibliographystyle{plain}
\bibliography{agents}

\end{document}